\DeclareSymbolFont{CMMI}{OML}{ccm}{m}{it}
\DeclareMathSymbol{v}{\mathalpha}{CMMI}{"76}
\newcommand{\gke}{\texttt{Gkeyll}\xspace}
\newcommand{\pderiv}[2]{
\frac{\partial #1}{\partial #2}
}
\newcommand{\pderivInline}[2]{
\partial #1/\partial #2
}
\renewcommand{\v}[1]{\ensuremath{\mbox{\boldmath$ #1 $}}} 
\newcommand{\gv}[1]{\ensuremath{\mbox{\boldmath$ #1 $}}} 
\newcommand{\uv}[1]{\ensuremath{\mathbf{\hat{#1}}}} 
\newcommand{\dashover}[2][\mathop]{#1{\mathpalette\df@over{{\dashfill}{#2}}}}
\newcommand{\fillover}[2][\mathop]{#1{\mathpalette\df@over{{\solidfill}{#2}}}}
\newcommand{\df@over}[2]{\df@@over#1#2}
\newcommand\df@@over[3]{%
  \vbox{
    \offinterlineskip
    \ialign{##\cr
      #2{#1}\cr
      \noalign{\kern1pt}
      $\m@th#1#3$\cr
    }
  }%
}
\newcommand{\dashfill}[1]{%
  \kern-.5pt
  \xleaders\hbox{\kern.5pt\vrule height.4pt width \dash@width{#1}\kern.5pt}\hfill
  \kern-.5pt
}
\newcommand{\dash@width}[1]{%
  \ifx#1\displaystyle
    2pt
  \else
    \ifx#1\textstyle
      1.5pt
    \else
      \ifx#1\scriptstyle
        1.25pt
      \else
        \ifx#1\scriptscriptstyle
          1pt
        \fi
      \fi
    \fi
  \fi
}
\newcommand{\solidfill}[1]{\leaders\hrule\hfill}
\newtheorem{proposition}{Proposition}
\journal{Journal of Plasma Physics}
\begin{document}

\title[Electromagnetic gyrokinetics in the tokamak edge]{Electromagnetic full-$f$ gyrokinetics in the tokamak edge with discontinuous Galerkin methods}
\author[N. R. Mandell, A. Hakim, G. W. Hammett and M. Francisquez]{N.~R. Mandell\aff{1} \corresp{\email{nmandell@princeton.edu}}, A. Hakim\aff{2}, G.~W. Hammett\aff{2}, M. Francisquez\aff{3}}
\affiliation{
\aff{1}Department of Astrophysical Sciences, Princeton University, Princeton, NJ 08543, USA
\aff{2}Princeton Plasma Physics Laboratory, Princeton, NJ 08543, USA
\aff{3}MIT Plasma Science and Fusion Center, Cambridge, MA, 02139, USA
}

\maketitle

\begin{abstract}
    We present an energy-conserving discontinuous Galerkin scheme for the full-$f$ electromagnetic gyrokinetic system in the long-wavelength limit. We use the symplectic formulation and solve directly for $\partial A_\parallel/\partial t$, the inductive component of the parallel electric field, using a generalized Ohm's law derived directly from the gyrokinetic equation. Linear benchmarks are performed to verify the implementation and show that the scheme avoids the Amp\`ere cancellation problem. We perform a nonlinear electromagnetic simulation in a helical open-field-line system as a rough model of the tokamak scrape-off layer using parameters from the National Spherical Torus Experiment (NSTX). This is the first published nonlinear electromagnetic gyrokinetic simulation on open field lines. Comparisons are made to a corresponding electrostatic simulation.
\end{abstract}

\section{Introduction}

Understanding turbulent transport physics in the tokamak edge and scrape-off layer (SOL) is critical to developing a successful fusion reactor. The dynamics in these regions plays a key role in determining the L--H transition, the pedestal height, and the heat load to the vessel walls. While the edge is often modelled by Braginskii-type fluid models that have provided valuable results and insights \citep{xu2008boundary,tamain2010tokam,ricci2012simulation,zhu2017global,francisquez2017global}, a kinetic treatment will inevitably be necessary for reliable quantitative predictions in some cases \citep{jenko2001nonlinear,cohen2008progress}. Gyrokinetic theory and direct numerical simulation have become important tools for studying turbulence and transport in fusion plasmas, especially in the core region \citep{parker1993gyrokinetic,kotschenreuther1995comparison,lin2000gyrokinetic,dimits2000comparisons,dorland2000electron,jenko2000massively,jost2001global,candy2003eulerian,idomura2003global,watanabe2005velocity,jolliet2007global,idomura2008conservative,peeters2009nonlinear,lanti2019orb5}. In the edge and SOL, gyrokinetic simulations are particularly challenging because the large, intermittent fluctuations in the SOL make assumptions of scale separation between equilibrium and fluctuations not strongly valid. This necessitates a full-$f$ approach that self-consistently evolves the full distribution function, $f$ (as opposed to the $\delta f$ approach commonly used in the core, where one assumes $f=F_0+\delta f$ with a fixed background $F_0$ so that only $\delta f$ perturbations must be evolved, and the parallel electric field nonlinearity is frequently neglected). Steady progress in gyrokinetic edge/SOL modelling has been made with both particle-in-cell (PIC) \citep{ku2009full,korpilo2016gyrokinetic,ku2016new} and continuum \citep{shi2019full,pan2018full,dorf2016continuum} methods. Another challenge is the magnetic geometry of the edge/SOL region, which requires treatment of open and closed magnetic field-line regions and the resulting plasma interactions with material walls on open field lines. The X-point in a diverted geometry is an additional complication which makes the use of field-aligned coordinates challenging. Currently only the XGC1 hybrid-Lagrangian PIC code \citep{ku2016new} can simulate gyrokinetic turbulence in a three-dimensional diverted geometry with an X-point.


The edge/SOL region also features steep pressure gradients, especially in the H-mode transport barrier and SOL regions, which contributes to the importance of electromagnetic effects. In this regime, the parallel electron dynamics is no longer fast relative to the drift turbulence, so electrons can no longer be treated adiabatically \citep{scott1997three}. 
This leads to coupling of the perpendicular vortex motions and kinetic shear Alfv\'en waves, which results in field-line bending \citep{xu2010intermittent}.
Including electromagnetic effects in gyrokinetic simulations has proved numerically and computationally challenging, both in the core and in the edge. The so-called Amp\`ere cancellation problem is one of the main numerical issues that has troubled primarily PIC codes
\citep{reynders1993gyrokinetic,cummings1994gyrokinetic}. Various $\delta f$ PIC schemes to address the cancellation problem have been developed and there are interesting recent advances in this area \citep{chen2003deltaf,mishchenko2004conventional,hatzky2007electromagnetic,mishchenko2014pullback,startsev2014finite,bao2018conservative}.
Meanwhile, some continuum $\delta f$ core codes avoided the cancellation problem completely \citep{rewoldt1987collisional,kotschenreuther1995comparison}, while others had to address somewhat minor issues resulting from it \citep{jenko2000massively,candy2003eulerian}. 
Meanwhile, some continuum $\delta f$ core codes avoided the cancellation problem completely \citep{rewoldt1987collisional,kotschenreuther1995comparison}, while others had to address somewhat minor issues resulting from it \citep{jenko2000massively,candy2003eulerian}. 
With respect to the cancellation problem, one possible reason for the differences might be that in continuum codes the fields and particles are discretized on the same grid, whereas in PIC codes the particle positions do not coincide with the field grid. 
Because particle positions are randomly located relative to the field grid, one might need to be more careful in some way when treating the interaction of the particles and electromagnetic fields.

To this point, all published nonlinear electromagnetic gyrokinetic results have focused on the core region, mostly within the $\delta f$ formulation neglecting the $E_\parallel$ nonlinearity, although the ORB5 PIC code includes the $E_\parallel$ nonlinearity and is effectively full-$f$ \citep{lanti2019orb5}. 
The XGC1 code is also full-$f$ and is focused on both the core and the edge/SOL; it has an option for a gyrokinetic ion/drift-fluid massless electron hybrid model \citep{hager2017verification}, with a fully kinetic implicit electromagnetic scheme based on \citet{chen2015multi} recently implemented and under further development \citep{ku2018fully}. 
Other gyrokinetic codes working on the SOL are not yet electromagnetic. Thus, to our knowledge, the results presented here are the first nonlinear electromagnetic full-$f$ gyrokinetic turbulence simulations on open field lines.

In this paper we present a numerical scheme for simulating the full-$f$ electromagnetic gyrokinetic system using a continuum approach. We use an energy-conserving discontinuous Galerkin (DG) scheme for the discretization of the gyrokinetic system in phase space, building on the work of \citet{hakim2019discontinuous,shi2017gyrokinetic,shi-thesis,liu2000high}. DG methods are attractive because they are highly local (enabling fairly straightforward parallelization schemes), allow high-order accuracy, and enforce local conservation laws \citep{durran2010numerical}. The present target of the scheme is simulating the edge and SOL of tokamaks, although the scheme could in principle be used for whole-device modelling including the core. Our scheme has been implemented as part of the gyrokinetics solver \citep{shi2017gyrokinetic,shi2019full,bernard2019gyrokinetic} of the \gke computational plasma framework, which also includes solvers for the Vlasov--Maxwell system \citep{cagas2017continuum,juno2018discontinuous} and multi-moment fluid equations \citep{wang2015comparison}.  

The paper is organized as follows. In Section \ref{sec:emgk}, we describe the electromagnetic gyrokinetic system and some of its conservation properties. Section \ref{sec:DG} describes the discontinuous Galerkin phase-space discretization of the system, and also presents proofs that the scheme preserves particle and energy conservation. The time-discretization scheme is handled in Section \ref{sec:timedisc}. In Section \ref{sec:linear} we present some linear electromagnetic benchmarks that validate the scheme and also demonstrate the avoidance of the cancellation problem. We present nonlinear results showing the first electromagnetic gyrokinetic turbulence simulation on open field lines in Section \ref{sec:nonlinear}, along with comparisons to a corresponding electrostatic simulation. We summarize and address future work in Section \ref{sec:conclusion}.

%
%
%
%
%
%
%
%

\section{The electromagnetic gyrokinetic system}
\label{sec:emgk}
\subsection{Basic equations}
We solve the full-$f$ electromagnetic gyrokinetic (EMGK) equation in the symplectic formulation \citep{brizard2007foundations}, which describes the evolution of the gyrocentre distribution function $f_s(\v{Z},t)=f_s(\v{R},v_\parallel,\mu,t)$ for each species $s$, where $\v{Z}$ is a phase-space coordinate composed of the guiding centre position $\v{R}=(x,y,z)$, the parallel velocity $v_\parallel$ and the magnetic moment $\mu=m_s v_\perp^2/(2B)$.
In terms of the gyrocentre Hamiltonian and the Poisson bracket in gyrocentre coordinates, {and also including collisions $C[f_s]$ and sources $S_s$ (which do not derive from the bracket),} the gyrokinetic equation is given by\footnote{One can use extended gyrocentre phase-space coordinates, which include time $t$ and the canonically conjugate energy $w$, to include the time derivative terms in Eq. (\ref{liouville}) inside an extended Poisson bracket \citep{brizard2007foundations}. For ease of presentation we do not take this approach.}
\begin{align}
    \pderiv{f_s}{t} + \{f_s, H_s\} - \frac{q_s}{m_s}\pderiv{A_\parallel}{t}\pderiv{f_s}{v_\parallel} = C[f_s] + S_s, \label{liouville}
\end{align}
or equivalently,
\begin{align}
    \pderiv{f_s}{t} + \dot{\v{R}}\v{\cdot}\nabla f_s + \dot{v}^H_\parallel \pderiv{f_s}{v_\parallel}- \frac{q_s}{m_s}\pderiv{A_\parallel}{t}\pderiv{f_s}{v_\parallel} = C[f_s] + S_s,
\end{align}
where the gyrokinetic Poisson bracket is given by
\begin{equation}
  \{F,G\} = \frac{\v{B^*}}{m B_\parallel^*} \v{\cdot} \left(\nabla F \frac{\partial G}{\partial v_\parallel} - \frac{\partial F}{\partial v_\parallel}\nabla G\right) - \frac{\uv{b}}{q B_\parallel^*}\times \nabla F \v{\cdot} \nabla G, 
\end{equation}
and we take the gyrocentre Hamiltonian to be
\begin{equation}
    H_s = \frac{1}{2}m_sv_\parallel^2 + \mu B + q_s  \phi.
\end{equation}
Here we have taken the long-wavelength (drift-kinetic) limit to neglect gyroaveraging of the electrostatic potential $\phi$, and we have also dropped higher- order terms in the Hamiltonian that appear in \emph{e.g.} \citet{brizard2007foundations}; extensions to include gyroaveraging will be included in later work, but these additions will not change the overall scheme presented here.
The nonlinear phase-space characteristics are given by
\begin{gather}
    \dot{\v{R}} = \{\v{R},H_s\} = \frac{\v{B^*}}{B_\parallel^*}v_\parallel + \frac{\uv{b}}{q_s B_\parallel^*}\times\left(\mu\nabla B + q_s \nabla \phi\right), \\
    \dot{v}_\parallel = \dot{v}^H_\parallel -\frac{q_s}{m_s}\pderiv{A_\parallel}{t} = \{v_\parallel,H_s\}-\frac{q_s}{m_s}\pderiv{A_\parallel}{t} = -\frac{\v{B^*}}{m_s B_\parallel^*}\v{\cdot}\left(\mu\nabla B + q_s \nabla \phi\right)-\frac{q_s}{m_s}\pderiv{A_\parallel}{t}. \label{vpardot}
\end{gather}
Here, $B_\parallel^*=\uv{b}\v{\cdot} \v{B^*}$ is the parallel component of the effective magnetic field $\v{B^*}=\v{B}+(m_s v_\parallel/q_s)\nabla\times\uv{b} + \delta \v{B}$, where $\v{B} = B \uv{b}$ is the equilibrium magnetic field and $\delta \v{B} = \nabla\times(A_\parallel \uv{b})\approx \nabla A_\parallel \times \uv{b}$ is the perturbed magnetic field {(assuming that the equilibrium magnetic field varies on spatial scales longer than perturbations so that $A_\parallel\nabla\times\uv{b}$ can be neglected)}. We neglect higher-order parallel compressional fluctuations of the magnetic field, so that $\delta\v{B}=\delta\v{B}_\perp$. The species charge and mass are $q_s$ and $m_s$, respectively. In Eq. (\ref{vpardot}), note that we have separated $\dot{v}_\parallel$ into a term that comes from the Hamiltonian, $\dot{v}^H_\parallel = \{v_\parallel,H_s\}$, and another term proportional to the inductive component of the parallel electric field, $(q/m)\partial {A_\parallel}/\partial{t}$. We use this notation for convenience, and so that the time derivative of the parallel vector potential $A_\parallel$ appears explicitly. {Further, we will assume a field-aligned coordinate system \citep[\emph{e.g.}][]{beer1995field}, and we will take the perpendicular directions to be $x$ and $y$, and the parallel direction to be $z$.}

In the absence of collisions $C[f_s]$ and sources $S_s$, Eq. \eqref{liouville} can be recognized as a Liouville equation, which shows that the distribution function is conserved along the nonlinear characteristics. Liouville's theorem also shows that phase-space volume is conserved,
\begin{equation}
    \pderiv{\mathcal{J}}{t} + \nabla\v{\cdot}\left(\mathcal{J}\dot{\v{R}}\right) + \pderiv{}{v_\parallel}\left(\mathcal{J}\dot{v}^H_\parallel\right) - \pderiv{}{v_\parallel}\left(\mathcal{J}\frac{q_s}{m_s}\pderiv{A_\parallel}{t}\right) = 0,
\end{equation}
where $\mathcal{J} = B_\parallel^*$ is the Jacobian of the gyrocentre coordinates, and we will make the approximation $\uv{b}\v{\cdot}\nabla\times\uv{b}\approx0$ so that $B_\parallel^*\approx B$. 

We can now write the gyrokinetic equation in conservative form, 
\begin{equation}
\pderiv{(\mathcal{J}f_s)}{t} + \nabla\v{\cdot}( \mathcal{J} \dot{\v{R}} f_s) + \pderiv{}{v_\parallel}\left(\mathcal{J}\dot{v}^H_\parallel f_s\right)- \pderiv{}{v_\parallel}\left(\mathcal{J}\frac{q_s}{m_s}\pderiv{A_\parallel}{t} f_s \right)= \mathcal{J} C[f_s] + \mathcal{J} S_s. \label{emgk} 
\end{equation}
Here, we have used the symplectic formulation of electromagnetic gyrokinetics, where the parallel velocity is used as an independent variable (as opposed to the Hamiltonian formulation which uses the parallel canonical momentum $p_\parallel$ as an independent variable) \citep{brizard2007foundations,hahm1988nonlinear}. Notably, in the symplectic formulation, the time derivative of $A_\parallel$ appears explicitly in the gyrokinetic equation, Eq. \eqref{emgk}, and $A_\parallel$ appears in $\v{B^*}$ but not in the Hamiltonian.  

The electrostatic potential is determined by the quasi-neutrality condition in the long-wavelength limit, given by
\begin{equation}
    \sigma_g + \sigma_\text{pol} = \sigma_g - \nabla\v{\cdot}\v{P} = 0,
\end{equation}
with the guiding centre charge density (neglecting gyroaveraging in the long-wavelength limit)
\begin{equation}
    \sigma_g = \sum_s q_s \int d\v{w}\ \mathcal{J} f_s.
\end{equation}
Here, we have defined $d\v{w}= 2\pi\, m_s^{-1} dv_\parallel\, d\mu = m_s^{-1} dv_\parallel\, d\mu \int d\alpha$ as the gyrocentre velocity-space volume element
$(d{\bf v}=m_s^{-1}dv_\parallel\, d\mu\, d\alpha\, \mathcal{J})$ with the gyroangle $\alpha$ integrated away and the Jacobian factored out. The polarization vector is
\begin{equation}
    \v{P} = -\sum_s \int d\v{w}\ \frac{m_s}{B^2}\mathcal{J}f_s \nabla_\perp \phi \approx -\sum_s \frac{m_s n_{0s}}{B^2} \nabla_\perp \phi \equiv - \epsilon_\perp \nabla_\perp \phi,
\end{equation}
where $\nabla_\perp=\nabla-\uv{b}(\uv{b}\v{\cdot}\nabla)$ is the gradient perpendicular to the background magnetic field. We use a linearized polarization density $n_0$ that we take to be a constant in time, which is consistent with neglecting a second-order $E\times B$ energy term in the Hamiltonian. While the validity of this approximation in the SOL can be questioned due to large density fluctuations, a linearized polarization density is commonly used for computational efficiency \citep{ku2018fast,shi2019full}. 
Future work will include the nonlinear polarization density along with the second-order $E\times B$ energy term in the Hamiltonian.
The quasi-neutrality condition can then be rewritten as the long-wavelength gyrokinetic Poisson equation,
\begin{equation}
    -\nabla \v{\cdot} \sum_s \frac{m_s n_{0s}}{B^2} \nabla_\perp \phi = \sum_s q_s \int d\v{w}\ \mathcal{J}f_s. \label{poisson}
\end{equation}
Even in the long-wavelength limit with no gyroaveraging, the first-order polarization charge density on the left-hand side of Eq. (\ref{poisson}) incorporates some finite Larmor radius (FLR) effects.

The parallel vector potential $A_\parallel$ is determined by the parallel Amp\`ere equation,
\begin{equation}
    -\nabla^2_\perp A_\parallel = \mu_0 J_\parallel = \mu_0 \sum_s q_s \int d\v{w}\ \mathcal{J} v_\parallel f_s. \label{ampere}
\end{equation}
Note that we can also take the time derivative of this equation to get a generalized Ohm's law which can be solved directly for $\pderivInline{A_\parallel}{t}$, the inductive component of the parallel electric field $E_\parallel$ \citep{reynders1993gyrokinetic, cummings1994gyrokinetic, chen2001gyrokinetic}
\begin{equation}
    -\nabla_\perp^2 \pderiv{A_\parallel}{t} = \mu_0 \sum_s q_s \int d\v{w}\ v_\parallel \pderiv{(\mathcal{J} f_s)}{t}.
\end{equation}
Writing the gyrokinetic equation as
\begin{equation}
    \pderiv{(\mathcal{J}f_s)}{t} = 
    \pderiv{(\mathcal{J}f_s)}{t}^\star + \pderiv{}{v_\parallel}\left(\mathcal{J} \frac{q_s}{m_s}\pderiv{A_\parallel}{t} f_s\right), \label{fstar}
\end{equation}
where $\partial{(\mathcal{J}f_s)^\star}/\partial{t}$ denotes all the terms in the gyrokinetic equation (including sources and collisions) except the $\pderivInline{A_\parallel}{t}$ term, Ohm's law can be rewritten (after an integration by parts) as 
\begin{equation}
    \left(-\nabla_\perp^2 + \sum_s \frac{\mu_0 q_s^2}{m_s} \int d\v{w}\  \mathcal{J} f_s\right) \pderiv{A_\parallel}{t} = \mu_0 \sum_s q_s \int d\v{w}\ v_\parallel \pderiv{(\mathcal{J}f_s)}{t}^\star. \label{ohmstar}
\end{equation}
As we will show in Section \ref{sec:timedisc}, this form allows for the use of an explicit time-stepping scheme in which one can first compute  $\partial{(\mathcal{J}f_s)^\star}/\partial{t}$ (which does not involve $\pderivInline{A_\parallel}{t})$, then compute $\pderivInline{A_\parallel}{t}$, and finally compute $\pderivInline{(\mathcal{J}f_s)}{t}$.
Note, however, that in some PIC approaches \citep{reynders1993gyrokinetic, chen2001gyrokinetic}, one must expand the right-hand side of Eq. (\ref{ohmstar}) by inserting the gyrokinetic equation so that the right-hand side involves only moments of $f_s$ without time derivatives. In our continuum scheme we can compute  $\partial{(\mathcal{J}f_s)^\star}/\partial{t}$ directly and then perform the integration. Further, note that, although we are using the symplectic ($v_\parallel$) formulation of EMGK, our Ohm's law from Eq. (\ref{ohmstar}) contains two integral terms which must cancel exactly. This is the root of the cancellation problem that appears in Amp\`ere's law in the Hamiltonian ($p_\parallel$) formulation, and in Appendix \ref{app:cancel} we show that the same cancellation problem could arise from Eq. (\ref{ohmstar}) if the integrals are not treated consistently.

To model the effect of collisions we use a conservative Lenard--Bernstein (or Dougherty) collision operator \citep{Lenard1958plasma,Dougherty1964model},
\begin{align} \label{eq:GkLBOEq}
\mathcal{J}C[f] &= \nu\left\lbrace\pderiv{}{v_\parallel}\left[\left(v_\parallel - u_\parallel\right)\mathcal{J} f+v_{t}^2\pderiv{(\mathcal{J} f)}{v_\parallel}\right]+\pderiv{}{\mu}\left[2\mu \mathcal{J} f+2\mu\frac{m}{B}v_{t}^2\pderiv{(\mathcal{J} f)}{\mu}\right]\right\rbrace,
\end{align}
where
\begin{align}
    n u_\parallel = \int d\v{w}\ \mathcal{J} v_\parallel f, \qquad\qquad n u_\parallel^2 + 3 n v_{t}^2 = \int d\v{w}\ \mathcal{J}\left(v_\parallel^2 + 2\mu B/m\right)f,
\end{align}
with $n = \int d\v{w}\ \mathcal{J} f$.
This collision operator contains the effect of drag and pitch-angle scattering, and it conserves number, momentum and energy density. Consistent with our present long-wavelength treatment of the gyrokinetic system, finite Larmor radius effects are ignored. For simplicity we restrict ourselves to the case in which the collision frequency $\nu$ is velocity independent, i.e. $\nu\neq\nu(v)$.  
Further details about this collision operator, including its conservation properties and its discretization, are left to a separate paper \citep{francisquez2020conservative}. In this work, we include only the effects of like-species collisions, which neglects electron--ion collisions and the resulting resistivity. A conservative scheme for cross-species collisions has also been implemented and will be included in later work. Extensions to a more complete collision operator are in progress.

\subsection{Conservation properties}

In the absence of collisions and sources, the Hamiltonian structure of the gyrokinetic system guarantees conservation of arbitrary functions of $f$ along the characteristics,
\begin{align}
    \pderiv{G(f)}{t} + \{G(f), H\} - \frac{q}{m}\pderiv{A_\parallel}{t}\pderiv{G(f)}{v_\parallel}=0,
\end{align}
along with corresponding Casimir invariants $\int d\v{R}\ d\v{w}\ \mathcal{J} G(f)$, where $d\v{R}= dx\,dy\,dz$.
Thus, the system has an infinite number of conserved quantities such as the total particle number (or $L_1$ norm) $N=\int d\v{R}\ d\v{w}\ \mathcal{J} f$, the $L_2$ norm $M=\int d\v{R}\ d\v{w}\ \mathcal{J} f^2$ and the kinetic entropy $S=-\int d\v{R}\ d\v{w}\ \mathcal{J} f\ln f$ \citep{idomura2008conservative}.

The system also conserves total energy, $W = W_K + W_E + W_B = W_H - W_E + W_B$, where the kinetic particle energy (neglecting the kinetic energy of the $E\times B$ flow) is
\begin{equation}
    W_K = \sum_s \int d\v{R}\ d\v{w}\ \mathcal{J} \left(\frac{1}{2}m_s v_\parallel^2 + \mu B\right) f_s, \label{W_k}
\end{equation}
the (non-vacuum) electrostatic field energy (equivalent to the kinetic energy associated with the $E\times B$ flow of particles) is
\begin{equation}
    W_E = \sum_s \int d\v{R}\ \frac{1}{2}\frac{m_s n_{0s}}{B^2}|\nabla_\perp\phi|^2 = \int d\v{R}\ \frac{\epsilon_\perp}{2}|\nabla_\perp\phi|^2 , \label{W_E}
\end{equation}
the (perturbed) electromagnetic field energy is
\begin{equation}
    W_B = \int d\v{R}\ \frac{1}{2\mu_0}|\nabla_\perp A_\parallel|^2, \label{W_B}
\end{equation}
and
\begin{equation}
    W_H = \sum_s \int d\v{R}\ d\v{w}\ \mathcal{J} H_s f_s. \label{W_H}
\end{equation}
(Note that $W_H$ is the sum of the particle kinetic energy and twice the potential energy, because every pair of particle interactions is double counted in the raw integral of $q_s \phi f_s$.)

Assuming the boundary conditions are periodic or that the distribution function vanishes at the boundary so that surface terms vanish,
the evolution of these quantities can be calculated as
\begin{align}
    \frac{dW_H}{dt} &= \sum_s \int d\v{R}\ d\v{w}\  H_s \pderiv{(\mathcal{J}f_s)}{t} + 
    \sum_s \int d\v{R}\ d\v{w}\   \mathcal{J}f_s \pderiv{H_s}{t} \notag\\&= - \int d\v{R}\  J_\parallel \pderiv{A_\parallel}{t} + \int d\v{R}\ \sigma_g \pderiv{\phi}{t}, \label{dWHdt} \\
   \frac{d W_E}{dt} &= \sum_s \int d\v{R}\ \frac{m_s n_{0s}}{B^2}\nabla_\perp\phi \v{\cdot} \nabla_\perp \pderiv{\phi}{t} = \int d\v{R}\ \sigma_g \pderiv{\phi}{t}, \label{dWEdt} \\
   \frac{d W_B}{d t} &= \int d\v{R}\ \frac{1}{\mu_0}\nabla_\perp A_\parallel \v{\cdot} \nabla_\perp \pderiv{A_\parallel}{t} = \int d\v{R}\ J_\parallel \pderiv{A_\parallel}{t}, \label{dWBdt}
\end{align}
so that total energy is indeed conserved:
\begin{align}
    \frac{dW}{dt} = \frac{dW_H}{dt} - \frac{d W_E}{dt} + \frac{d W_B}{dt} = 0.
\end{align}




\section{The discrete EMGK system}
\label{sec:DG}
\noindent
In this section we describe the  phase-space discretization of the electromagnetic gyrokinetic system used in \gke. 

\subsection{Discrete equations}

We use an energy-conserving discontinuous Galerkin scheme to discretize the gyrokinetic system in phase space. The scheme generalizes the algorithm of \citet{liu2000high} (originally for the two-dimensional incompressible Euler and Navier-Stokes equations) to arbitrary Hamiltonian systems \citep{hakim2019discontinuous,shi2017gyrokinetic,shi-thesis}. However, unlike the nodal approach used in \citet{shi2017gyrokinetic,shi-thesis}, we use a modal DG scheme.

We start by decomposing the global phase-space domain $\Omega$ into a {structured} phase-space mesh $\mathcal{T}$ with cells $\mathcal{K}_j \in \mathcal{T},\ j=1,...,N$. We then introduce a piecewise-polynomial approximation space for the distribution function $f(\v{R},v_\parallel,\mu)$,
\begin{equation}
    \mathcal{V}_h^p = \{v : v|_{\mathcal{K}_j}\in \v{P}^p, \forall \mathcal{K}_j\in \mathcal{T}\},
\end{equation}
where $\v{P}^p$ is some space of polynomials with maximum degree $p$ (by some measure). 
That is, $v(z)$ are polynomial functions of $\v{z}$ in each cell, and $\v{P}^p$ is the space of the linear combination of some set of multi-variate polynomials.
In this work, we choose $\v{P}^p$ to be an orthonormalized serendipity polynomial element space \citep{arnold2011serendipity}.
The serendipity basis set
has the advantage of using fewer basis functions while giving the same formal
convergence order (although it is less accurate) as the Lagrange tensor
basis, although note that, for $p=1$, the serendipity basis is equivalent to the Lagrange tensor basis. 
We can then obtain the discrete weak form of the gyrokinetic equation by multiplying Eq. (\ref{emgk}) by any test function $\psi\in\mathcal{V}^p_h$
and integrating (by parts) in each cell
\begin{align}
    \int\limits_{\mathcal{K}_j}d\v{R}\ d\v{w}\ &\psi \pderiv{(\mathcal{J}f_h)}{t} + \oint\limits_{\partial \mathcal{K}_j} d\v{w}\ d\v{s}_R\v{\cdot}\dot{\v{R}}_h \psi^- \widehat{\mathcal{J}f_h}+ \oint\limits_{\partial \mathcal{K}_j} d\v{R}\ ds_w\ \left(\dot{v}^H_{\parallel h}-\frac{q}{m}\pderiv{A_{\parallel h}}{t}\right) \psi^- \widehat{\mathcal{J}f_h} \notag\qquad\qquad\\
    &\quad- \int\limits_{\mathcal{K}_j}d\v{R}\ d\v{w}\ \mathcal{J}f_h \dot{\v{R}}_h\v{\cdot} \nabla \psi - \int\limits_{\mathcal{K}_j}d\v{R}\ d\v{w}\ \mathcal{J}f_h \left(\dot{v}^H_{\parallel h}-\frac{q}{m}\pderiv{A_{\parallel h}}{t}\right) \pderiv{\psi}{v_\parallel} \notag\\&\qquad= \int\limits_{\mathcal{K}_j}d\v{R}\ d\v{w}\ \psi\left(\mathcal{J}C[f_h] + \mathcal{J}S\right). \label{DGgk}
\end{align}
Solving this equation for all test functions $\psi\in\mathcal{V}_h^p$  in all cells $\mathcal{K}_j\in\mathcal{T}$ yields the discretized distribution function $f_h\in\mathcal{V}^p_h$, where the subscript $h$ denotes a discrete quantity in $\mathcal{V}^p_h$. In the surface terms, $d\v{s}_R$ is the differential element on a configuration-space surface (pointing outward normal to the surface), $ds_w=2\pi\, m_s^{-1} d\mu\, \v{n}\v{\cdot} (\partial{\v{Z}}/\partial{v_\parallel})$ is the differential element on a $v_\parallel$ surface, and the notation $\psi^-$ ($\psi^+$) indicates that the function $\psi$ is evaluated just inside (outside) the surface $\partial \mathcal{K}_j$. The notation $\widehat{f}=\widehat{f}(f^+,f^-)$ indicates a `numerical flux', which takes a single value at the cell surface and in general can depend on the solution on both sides of the surface since the solution is discontinuous at the surface. Here, we choose to use standard upwind fluxes, which depend on the local value of the phase-space characteristic flow normal to the surface evaluated at each Gaussian quadrature point on the surface. Denoting the flow as $\gv{\alpha}_h$, the upwind flux can be expressed as
\begin{equation}
    \widehat{f_h} = \frac{1}{2}\left(f_h^++f_h^-\right)-\frac{1}{2}\text{sgn}\left(\v{n}\v{\cdot}\gv{\alpha}_h\right)\left(f_h^+-f_h^-\right),
\end{equation}
where $\v{n}=d\v{s}/|d\v{s}|$ is the unit normal pointing out of the $\partial \mathcal{K}_j$ surface. 

We will introduce a subset of $\mathcal{V}^p_h$ where the piecewise polynomials are continuous across cell interfaces, denoted by $\fillover{\mathcal{V}}^p_h$.
As we will show later, in order to preserve energy conservation in our discrete scheme, we will require that the discrete Hamiltonian be continuous across cell interfaces, \textit{i.e.} $H_h\in\fillover{\mathcal{V}}^p_h$  \citep{hakim2019discontinuous,liu2000high,shi2017gyrokinetic,shi-thesis}. Note that one can show that this ensures that the discrete phase-space characteristics, $\dot{\v{R}}_h=\{\v{R},H_h\}$ and $\dot{v}^H_{\parallel h}-({q_s}/{m_s})\partial{A_{\parallel h}}/\partial{t} = \{v_\parallel,H_h\}-({q_s}/{m_s})\partial{A_{\parallel h}}/\partial{t}$, are also continuous across cell interfaces.\footnote{In a general non-orthogonal field-aligned geometry this is not necessarily true. This is because $\v{B}^*\v{\cdot}\nabla z$ contains $A_{\parallel _h}$, which can be discontinuous in the $z$ direction. This makes the characteristic speed $\dot{\v{R}}_h\v{\cdot}\nabla z$ discontinuous across $z$ cell interfaces. This will be addressed in a separate paper dealing with non-orthogonal field-aligned geometry.}

We must also discretize the field equations. We introduce the \textit{restriction} of the phase-space mesh to configuration space, $\mathcal{T}^R$, and we  denote the configuration-space cells by $\mathcal{K}_j^R\in\mathcal{T}^R$ for $j=1,...,N_R$, where $N_R$ is the number of configuration-space cells. We also restrict $\mathcal{V}_h^p$ to configuration space as
\begin{equation}
    \mathcal{X}_h^p = \mathcal{V}_h^p \setminus \mathcal{T}^R.
\end{equation}
Further, we introduce the subset of polynomials that are piecewise continuous across configuration-space cell interfaces $\fillover{\mathcal{X}}_h^p \subset \mathcal{X}_h^p$, along with an additional subset $\dashover{\mathcal{X}}_h^p\subset \mathcal{X}_h^p$ where continuity is required in the directions perpendicular to the magnetic field, but not in the direction parallel to the field. Assuming a field-aligned coordinate system \citep[\emph{e.g.}][]{beer1995field}, we will take the perpendicular directions to be $x$ and $y$, and the parallel direction to be $z$.  

Since we require $H_h$ to be continuous across all cell interfaces, this means that we require $\phi_h$ to be continuous, \textit{i.e.} $\phi_h\in\fillover{\mathcal{X}}_h^p$. Thus to solve the Poisson equation we use the (continuous) finite-element method (FEM). While one could ensure $\phi_h$ is continuous in all directions by using a three-dimensional FEM solve, we instead use a two-dimensional FEM solve in the $x$ and $y$ directions, followed by a one-dimensional smoothing operation in the $z$ direction. That is, we first solve for $\dashover{\phi}_h\in \dashover{\mathcal{X}}^p_h$ using a two-dimensional FEM solve, and then we use a smoothing/projection operation to ensure continuity in the $z$ direction. We will denote this operation as $\phi_h = \mathcal{P}_z[\dashover{\phi}_h]$ and define it below. We can make this splitting because $\nabla_\perp$ only produces coupling in the $x$ and $y$ (perpendicular) directions. 

For the two-dimensional solve, we solve for $\dashover{\phi}_h\in \dashover{\mathcal{X}}^p_h$ by multiplying Eq. (\ref{poisson}) by a test function $\xi \in \dashover{\mathcal{X}}^p_h$ and integrating (by parts) in each configuration-space cell $\mathcal{K}^R_j$ to obtain the discrete \emph{local} weak form
\begin{equation}
\int\limits_{\mathcal{K}^R_j} d\v{R}\ \epsilon_{\perp} \nabla_\perp \dashover{\phi}\mskip0.01\thinmuskip_h \v{\cdot} \nabla_\perp \xi^{(j)} - \oint\limits_{\partial \mathcal{K}^R_j} d\v{s}_R\v{\cdot}\nabla_\perp\dashover{\phi}\mskip0.01\thinmuskip_h\ \xi^{(j)}\ \epsilon_{\perp} = \int\limits_{\mathcal{K}^R_j} d\v{R}\ \xi^{(j)}\ \mathcal{P}^*_z[\sigma_{g\,h}], \label{FEMpoisson}
\end{equation}
where $\xi^{(j)}$ denotes the restriction of $\xi$ to cell $j$ and
\begin{equation}
    \sigma_{g\,h}=\sum_s q_s \int\limits_{\mathcal{T}^v}d\v{w}\  \mathcal{J}f_{s\,h},
\end{equation}
with $\mathcal{T}^v$ the restriction of $\mathcal{T}$ to velocity space.
The global weak form is then obtained by summing Eq. (\ref{FEMpoisson}) over cells in $x$ and $y$ (but not in $z$), which results in cancellation of the surface terms at cell interfaces and leaves only a global $\partial \mathcal{T}^R$ boundary term. Note that in order to maintain energetic consistency (as we will see below), the introduction of $\mathcal{P}_z$ necessitates the modification of the right-hand side of Eq. (\ref{FEMpoisson}) with $\mathcal{P}^*_z$, the adjoint of $\mathcal{P}_z$, defined as
\begin{equation}
\int_{\mathcal{T}^R}d\v{R}\ f \mathcal{P}_z[g] = \int_{\mathcal{T}^R}d\v{R}\ \mathcal{P}^*_z[f] g.
\end{equation}

For the smoothing operation $\phi_h=\mathcal{P}_z[\dashover{\phi}_h]$, we use a one-dimensional FEM solve in the $z$ direction. This can be written as the solution $\phi_h$ of the global (in $z$) weak equality
\begin{equation}
    \int_{\mathcal{T}^z_j}d\v{R}\ \chi\ \phi_h = \int_{\mathcal{T}^z_j}d\v{R}\ \chi \dashover{\phi}\mskip0.01\thinmuskip_h, \label{smooth}
\end{equation}
where $\chi\in\widehat{\mathcal{X}}^p_h\subset\mathcal{X}^p_h$, with $\widehat{\mathcal{X}}^p_h$ a subset of the configuration-space basis where continuity is required only in the $z$ direction. Here, $\mathcal{T}^z_j$ denotes a restriction of the domain that is global in $z$ but cell-wise local in $x$ and $y$. 
We remark that using an FEM solve for this operation makes $\mathcal{P}_z$ self-adjoint, so that $\mathcal{P}^*_z=\mathcal{P}_z$. Note, however, that one could instead use a different, local smoothing operation that is not self-adjoint, so we will keep the distinction between $\mathcal{P}_z$ and $\mathcal{P}_z^*$. Also note that $\mathcal{P}_z$ is a projection operator, in that $\mathcal{P}_z[\mathcal{P}_z[\dashover{\phi}_h]]=\mathcal{P}_z[\dashover{\phi}_h]$.

The continuous discrete Hamiltonian $H_h\in\fillover{\mathcal{V}}_h^p$ is then given by
\begin{equation}
    H_h = \frac{1}{2}m{v_{\parallel\,h}^2} + \mu B_h + q \mathcal{P}_z[\dashover{\phi}\mskip0.01\thinmuskip_h],
\end{equation}
where ${v_{\parallel\,h}^2}$ is the projection of $v_\parallel^2$ onto $\fillover{\mathcal{V}}^p_h$. Note that this is only necessary when $v_\parallel^2$ is not in the basis, \emph{i.e.} when $p_v<2$, where $p_v$ is the maximum degree of the $v_\parallel$ monomials in the basis set.

For the parallel Amp\`ere equation we will take $A_{\parallel h}\in\dashover{\mathcal{X}}^p_h$ so that $A_{\parallel h}$ is continuous in $x$ and $y$ but discontinuous in $z$. Multiplying Eq. (\ref{ampere}) by a test function $\varphi\in\dashover{\mathcal{X}}^p_h$ and integrating, we can obtain the discrete weak form of this equation. The local weak form in cell $j$ is
\begin{equation}
\int\limits_{\mathcal{K}^R_j} d\v{R}\  \nabla_\perp A_{\parallel h} \v{\cdot} \nabla_\perp \varphi^{(j)} - \oint\limits_{\partial \mathcal{K}^R_j} d\v{s}_R\v{\cdot}\nabla_\perp A_{\parallel h}\ \varphi^{(j)}\  = \mu_0 \int\limits_{\mathcal{K}^R_j} d\v{R}\ \varphi^{(j)}\ J_{\parallel h}, \label{FEMampere-local}
\end{equation}
where again the surface terms will cancel on summing over cells except at the global $\partial \mathcal{T}^R$ boundary, and
\begin{equation}
J_{\parallel h} = \sum_s\frac{q_s}{m_s} \int\limits_{\mathcal{T}^v}d\v{w}\ \mathcal{J}\pderiv{H_{s\,h}}{v_\parallel} f_{s\,h}. \label{Jparh}
\end{equation}
Here, note that we have replaced the $v_\parallel$ in the $J_\parallel$ definition from Eq. (\ref{ampere}) with $(1/m)\partial H_h/\partial v_\parallel$; this will be required for energy conservation in the $p_v=1$ case, since $\partial H_h/\partial v_\parallel\neq m v_\parallel$ when $v_\parallel^2$ is not in the basis. Instead, for $p_v=1$, $\partial H_h/\partial v_\parallel=m\bar{v}_\parallel$, the piecewise-constant projection of $mv_\parallel$. As before, we solve Eq. (\ref{FEMampere-local}) using a two-dimensional FEM solve in the $x$ and $y$ directions. Note, however, that we do not require the smoothing operation in $z$ here because $A_{\parallel h}$ is allowed to be discontinuous in the $z$ direction, since it does not appear in the Hamiltonian in the symplectic formulation of EMGK.

The discrete weak form of Ohm's law can be obtained by taking the time derivative of Eq. (\ref{FEMampere-local}); after some manipulation, which we leave to Appendix \ref{app:Ohm}, the local weak form becomes
\begin{align}
    \int\limits_{\mathcal{K}^R_j}\!\!d\v{R}\, &\nabla_\perp \pderiv{A_{\parallel h}}{t} \v{\cdot} \nabla_\perp \varphi^{(j)} - \!\!\oint\limits_{\partial \mathcal{K}^R_j} \!\!\!d\v{s}_R\v{\cdot}\nabla_\perp \pderiv{A_{\parallel h}}{t}\ \varphi^{(j)}
    - \!\!\int\limits_{\mathcal{K}_j^R}\!\!d\v{R}\, \varphi^{(j)}\pderiv{A_{\parallel h}}{t} \sum_{s,i} \frac{\mu_0 q_s^2}{m_s}\!\oint\limits_{\partial\mathcal{K}^v_i} \!\!ds_w\, \bar{v}_\parallel^- \widehat{\mathcal{J} f_{s\,h}} \notag \\
    &= \mu_0\sum_s q_s \!\!\int\limits_{\mathcal{K}^R_j} \!\!d\v{R}\, \varphi^{(j)} \left[\int\limits_{\mathcal{T}^v}\!\!d\v{w}\, \bar{v}_\parallel \pderiv{(\mathcal{J}f_{s\,h})}{t}^\star-\sum_i\!\oint\limits_{\partial \mathcal{K}_i^v}\!\!ds_w\,  \bar{v}_\parallel^-{\dot{v}^H_{\parallel h}} \widehat{\mathcal{J}f_{s\,h}} \right], \qquad(p_v=1) \label{Ohmp1} \\
   \int\limits_{\mathcal{K}^R_j}\!\!d\v{R}\, &\nabla_\perp \pderiv{A_{\parallel h}}{t} \v{\cdot} \nabla_\perp \varphi^{(j)} - \!\!\oint\limits_{\partial \mathcal{K}^R_j} \!\!\!d\v{s}_R\v{\cdot}\nabla_\perp \pderiv{A_{\parallel h}}{t}\ \varphi^{(j)}
   + \!\!\int\limits_{\mathcal{K}_j^R}\!\!d\v{R}\, \varphi^{(j)}\pderiv{A_{\parallel h}}{t} \sum_s \frac{\mu_0 q_s^2}{m_s}\!\int\limits_{\mathcal{T}^v}\!\! d\v{w}\, \mathcal{J} f_{s\,h} \notag \\
    &=\mu_0\sum_s q_s \!\!\int\limits_{\mathcal{K}^R_j} \!\!d\v{R}\, \varphi^{(j)}\!\int\limits_{\mathcal{T}^v}\!\!d\v{w}\, v_\parallel \pderiv{(\mathcal{J} f_{s\,h})}{t}^\star, \qquad (p_v>1) \label{Ohmp2}
\end{align}
where $\pderivInline{A_{\parallel h}}{t}\in \dashover{\mathcal{X}}^p_h$, and
\begin{align}
    \int\limits_{\mathcal{K}_j}d\v{R}\ d\v{w}\ \psi \pderiv{(\mathcal{J}f_h)}{t}^\star = &-\oint\limits_{\partial \mathcal{K}_j} d\v{w}\ d\v{s}_R\v{\cdot}\dot{\v{R}}_h \psi^- \widehat{\mathcal{J}f_h}+\int\limits_{\mathcal{K}_j}d\v{R}\ d\v{w}\ \mathcal{J}f_h \dot{\v{R}}_h\v{\cdot} \nabla \psi \notag \\&\quad+ \int\limits_{\mathcal{K}_j}d\v{R}\ d\v{w}\ \mathcal{J}f_h \dot{v}^H_{\parallel h} \pderiv{\psi}{v_\parallel} + \int\limits_{\mathcal{K}_j}d\v{R}\ d\v{w}\ \psi\left(\mathcal{J}C[f_h] + \mathcal{J}S\right) \label{partialGK}
\end{align}
so that the gyrokinetic equation can be written as
\begin{align}
    &\int\limits_{\mathcal{K}_j}\!\!d\v{R}\, d\v{w}\, \psi \pderiv{(\mathcal{J}f_h)}{t} =\notag\\ &\quad\int\limits_{\mathcal{K}_j}\!\!d\v{R}\, d\v{w}\, \psi \pderiv{(\mathcal{J}f_h)}{t}^\star - \oint\limits_{\partial \mathcal{K}_j}\!\!d\v{R}\, ds_w\, \!\!\left(\dot{v}^H_{\parallel h}-\frac{q}{m}\pderiv{A_{\parallel h}}{t}\right) \psi^- \widehat{\mathcal{J}f_h} - \int\limits_{\mathcal{K}_j}\!\!d\v{R}\, d\v{w}\, \mathcal{J}f_h \frac{q}{m}\pderiv{A_{\parallel h}}{t} \pderiv{\psi}{v_\parallel}. \label{DGstar}
\end{align}
Note that some special attention is required to ensure that upwinding of the numerical fluxes is handled consistently in Eqs. (\ref{Ohmp1}) and (\ref{DGstar}) in the $p_v=1$ case. The upwind flow for the $v_\parallel$ surface terms is $\dot{v}^H_{\parallel h}-({q}/{m})\partial{A_{\parallel h}}/\partial{t}$; this is somewhat problematic because we cannot readily solve for $\pderivInline{A_{\parallel h}}{t}$ from Eq. (\ref{Ohmp1}) without first knowing {the upwind direction, which depends on $\pderivInline{A_{\parallel h}}{t}$}. Thus for $p_v=1$ only, we use an approximate $\widetilde{\pderivInline{A_{\parallel h}}{t}}$, calculated using Eq. (\ref{Ohmp2}) (which contains no surface term contributions), to compute the upwind direction for the $v_\parallel$ surface terms in Eqs. (\ref{Ohmp1}) and (\ref{DGstar}). (One could extend this algorithm by iterating with a new estimate of the upwind direction based on the previous estimate of $\partial A_{\parallel\, h}/\partial t$, but we leave that for future work. The present algorithm seems to work well for the cases tested so far, {and we expect that $\widetilde{\pderivInline{A_{\parallel h}}{t}}$ results in the correct upwind direction most of the time}.)

In our modal DG scheme, integrals in the above weak forms are computed analytically using a quadrature-free scheme that results in exact integrations (of the discrete integrands). (This means there are no aliasing errors, and that integration by parts operations that led to these integrals are treated exactly, for the specified discrete representation of $f_h$ and other factors in the integrand.)  This is important for ensuring the conservation properties of the scheme, since the conservation laws in the EMGK system are indirect, involving integrals of the gyrokinetic equation. The fact that integrations are exact also has important implications for the cancellation problem. Since integrals in the discrete Ohm's law are computed exactly, the discretization errors (which are solely embedded in the discrete integrands) cancel exactly, avoiding the cancellation problem. {For more details about the modal scheme, the analytical integrations and the avoidance of the cancellation problem, we have included in Appendix \ref{app:disp} a derivation of a semi-discrete Alfv\'en wave dispersion relation that results from our scheme.}

\subsection{Discrete conservation properties} \label{discons}

Now we would like to show that the discrete system (in the continuous-time limit) preserves various conservation laws of the continuous system. As with the continuous system, we will consider the conservation properties in the absence of collisions, sources and sinks, and we will assume that the boundary conditions are either periodic or that the distribution function vanishes at the boundary.

\begin{proposition}
The discrete system conserves total number of particles (the $L_1$ norm).
\end{proposition}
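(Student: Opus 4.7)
The plan is to exploit the fact that the constant function $\psi = 1$ lies in $\mathcal{V}^p_h$ (since constants are trivially piecewise polynomials) and use it as a test function in the discrete weak form of the gyrokinetic equation, Eq.~(\ref{DGgk}), with collisions and sources dropped. With $\psi = 1$, the two volume terms involving $\nabla\psi$ and $\partial\psi/\partial v_\parallel$ vanish identically, leaving only the time-derivative volume integral and the two surface-flux integrals in each cell $\mathcal{K}_j$.

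Next I would sum the resulting equation over all cells $\mathcal{K}_j \in \mathcal{T}$. The key step is to show that every interior surface integral cancels pairwise. Consider an interior face shared between two adjacent cells $\mathcal{K}_j$ and $\mathcal{K}_{j'}$. The numerical flux $\widehat{\mathcal{J}f_h}$ is single-valued on that face by construction; the test-function trace $\psi^-$ from either side equals $1$ because $\psi \equiv 1$ is globally continuous; and the outward surface elements satisfy $d\v{s}_R^{(j)} = -d\v{s}_R^{(j')}$ (and analogously for $ds_w$). Hence the two neighbouring contributions cancel exactly. The same argument applies on each velocity-space interior face, where the upwind flow includes the $(q/m)\partial A_{\parallel h}/\partial t$ correction, but again the test trace is $1$ on both sides and the outward normals are opposite, so the flux cancels regardless of what the upwind direction is. Thus only boundary surface contributions on $\partial\Omega$ survive.

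For the boundary, the assumption stated at the start of the subsection on conservation laws removes these terms: with periodic boundary conditions the fluxes on opposing faces cancel, and with vanishing $f_h$ at the boundary the numerical flux $\widehat{\mathcal{J}f_h}$ is zero there (a standard property of the upwind flux formula when both $f_h^\pm$ vanish). One must also note that the $v_\parallel$ domain endpoints are handled similarly, since physically realistic distribution functions vanish for large $|v_\parallel|$ and we use zero-flux conditions there. What remains after summation is therefore simply
\begin{equation}
\sum_j \int_{\mathcal{K}_j} d\v{R}\, d\v{w}\, \pderiv{(\mathcal{J}f_h)}{t} = \frac{d}{dt}\int_{\Omega}\! d\v{R}\, d\v{w}\, \mathcal{J} f_h = 0,
\end{equation}
and summing over species yields $dN/dt = 0$.

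I do not expect any serious obstacle: the entire argument rests on (i) $1 \in \mathcal{V}^p_h$ so the interior volume terms disappear, (ii) the numerical flux being single-valued so that the opposing surface integrals at each interior face subtract, and (iii) the boundary hypotheses. The only subtlety worth flagging explicitly in the written proof is that the cancellation at $v_\parallel$ interfaces proceeds \emph{without} having to know the sign of the upwind flow, because the trace of $\psi = 1$ is the same on both sides; in particular, the $\partial A_{\parallel h}/\partial t$ piece of the characteristic speed, including the approximation $\widetilde{\partial A_{\parallel h}/\partial t}$ used for determining the upwind direction when $p_v = 1$, causes no difficulty for this conservation law.
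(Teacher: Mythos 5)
Your proposal is correct and follows essentially the same route as the paper: take $\psi=1$ in Eq.~(\ref{DGgk}), note the volume terms with $\nabla\psi$ and $\pderivInline{\psi}{v_\parallel}$ vanish, sum over cells, and observe that interior surface contributions cancel pairwise while boundary terms vanish under the stated boundary assumptions. The only nuance the paper states more explicitly is that the pairwise cancellation also uses the continuity of the discrete phase-space characteristics (guaranteed by $H_h\in\fillover{\mathcal{V}}^p_h$) in addition to the single-valued numerical flux, since the characteristic speed multiplies $\widehat{\mathcal{J}f_h}$ in the surface integrand.
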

\begin{proof}
Taking $\psi=1$ in the discrete weak form of the gyrokinetic equation, Eq. (\ref{DGgk}), and summing over all cells, we have
\begin{gather}
    \sum_j \pderiv{}{t}\!\int\limits_{\mathcal{K}_j}\!\!d\v{R}\, d\v{w}\, \mathcal{J}f_h + \sum_j \!\oint\limits_{\partial \mathcal{K}_j}\!\! d\v{w}\, d\v{s}_R\v{\cdot}\dot{\v{R}}_h \widehat{\mathcal{J}f_h}+\sum_j \!\oint\limits_{\partial \mathcal{K}_j}\!\! d\v{R}\, ds_w \!\left(\dot{v}^H_{\parallel h}-\frac{q}{m}\pderiv{A_{\parallel h}}{t}\right) \widehat{\mathcal{J}f_h} = 0 \notag \\
    \Rightarrow \quad \pderiv{}{t}\int\limits_{\mathcal{T}}d\v{R}\ d\v{w}\ \mathcal{J}f_h 
    = 0 
\end{gather}
where the surface terms cancel exactly at cell interfaces because the integrands (both the phase-space characteristics and the numerical fluxes) are continuous across the interfaces.
\end{proof}
\begin{proposition}
The discrete system conserves a discrete total energy, $W_h = W_{H\,h} - W_{E\,h} + W_{B\,h}$, where
\begin{gather}
    W_{H\,h} = \sum_s \int_\mathcal{T} d\v{R}\ d\v{w}\  \mathcal{J}f_{s\,h} H_{s\,h}, \\
    W_{E\,h} = \sum_s \int_\mathcal{T} d\v{R}\ d\v{w}\ \frac{\epsilon_\perp}{2}|\nabla_\perp \dashover{\phi}\mskip0.01\thinmuskip_h|^2,
\end{gather}
and
\begin{gather}
    W_{B\,h} = \int_{\mathcal{T}}d\v{R}\ \frac{1}{2\mu_0}|\nabla_\perp A_{\parallel h}|^2.
\end{gather}
\end{proposition}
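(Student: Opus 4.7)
The plan is to mirror the continuous energy-conservation calculation of Eqs.~\eqref{dWHdt}--\eqref{dWBdt}, computing the three pieces $dW_{H\,h}/dt$, $dW_{E\,h}/dt$ and $dW_{B\,h}/dt$ separately and checking that the two coupling contributions appearing in $dW_{H\,h}/dt$ are absorbed exactly by the other two. Three structural features of the scheme do all the work: $H_{s\,h}\in\fillover{\mathcal{V}}_h^p$ and the characteristics $\dot{\v{R}}_h$, $\dot{v}^H_{\parallel h}$ are continuous across every cell interface; integrations in the weak forms are performed exactly, so discrete moment identities and the integration-by-parts steps hold without aliasing; and the adjoint $\mathcal{P}_z^*$ was inserted on the right-hand side of~\eqref{FEMpoisson} precisely so that $\phi_h = \mathcal{P}_z[\dashover{\phi}_h]$ can enter the Hamiltonian while the variational test functions for the potential live in $\dashover{\mathcal{X}}^p_h$.

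For $dW_{H\,h}/dt$, I would start from
\begin{equation*}
\frac{dW_{H\,h}}{dt} = \sum_s \int H_{s\,h}\,\pderiv{(\mathcal{J}f_{s\,h})}{t} + \sum_s \int \mathcal{J}f_{s\,h}\,\pderiv{H_{s\,h}}{t},
\end{equation*}
taking $\psi = H_{s\,h}$ in the weak form~\eqref{DGgk} and summing over species and cells to evaluate the first sum. Continuity of $H_{s\,h}$ and of the characteristics, combined with the single-valued upwind flux $\widehat{\mathcal{J}f_h}$, makes interior surface contributions cancel pairwise, while boundary terms vanish under the assumed boundary conditions. The volume terms containing $\dot{\v{R}}_h\v{\cdot}\nabla H_{s\,h} + \dot{v}^H_{\parallel h}\,\partial H_{s\,h}/\partial v_\parallel = \{H_{s\,h},H_{s\,h}\} = 0$ vanish identically by antisymmetry of the bracket, leaving only the coupling term $\sum_s (q_s/m_s)\int \mathcal{J}f_{s\,h}\,(\partial A_{\parallel h}/\partial t)\,\partial H_{s\,h}/\partial v_\parallel$, which collapses to $\int J_{\parallel h}\,\partial A_{\parallel h}/\partial t$ by the discrete current definition~\eqref{Jparh}. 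For the second sum, $\partial H_{s\,h}/\partial t = q_s\,\mathcal{P}_z[\partial \dashover{\phi}_h/\partial t]$, so it becomes $\int \sigma_{g\,h}\,\mathcal{P}_z[\partial \dashover{\phi}_h/\partial t] = \int \mathcal{P}_z^*[\sigma_{g\,h}]\,\partial \dashover{\phi}_h/\partial t$ by the adjoint relation, giving $dW_{H\,h}/dt = -\int J_{\parallel h}\,\partial A_{\parallel h}/\partial t + \int \mathcal{P}_z^*[\sigma_{g\,h}]\,\partial \dashover{\phi}_h/\partial t$. For the field energies, I would test~\eqref{FEMpoisson} with $\xi = \partial\dashover{\phi}_h/\partial t$ and~\eqref{FEMampere-local} with $\varphi = \partial A_{\parallel h}/\partial t$; continuity of $\dashover{\phi}_h$ and $A_{\parallel h}$ in $x$ and $y$ makes interior surface terms cancel, yielding $dW_{E\,h}/dt = \int \mathcal{P}_z^*[\sigma_{g\,h}]\,\partial\dashover{\phi}_h/\partial t$ and $dW_{B\,h}/dt = \int J_{\parallel h}\,\partial A_{\parallel h}/\partial t$. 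Adding the three pieces with the signs in $W_h$ then shows $dW_h/dt = 0$.

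The main subtlety I anticipate is the bookkeeping of surface cancellations: the argument requires that $H_h$ be continuous in every direction (which is why $\phi_h$ must be smoothed by $\mathcal{P}_z$ in $z$), and that $\partial A_{\parallel h}/\partial t$ live in the same FEM space $\dashover{\mathcal{X}}^p_h$ used for Ampère and Ohm so that the transverse FEM surface contributions cancel consistently. A secondary wrinkle arises for $p_v = 1$, where $\partial H_h/\partial v_\parallel = m\bar{v}_\parallel \neq m v_\parallel$: the reduction of the coupling volume term to $\int J_{\parallel h}\,\partial A_{\parallel h}/\partial t$ only closes cleanly because $J_{\parallel h}$ is defined through $\partial H_h/\partial v_\parallel$ in~\eqref{Jparh} rather than through $v_\parallel$, which is precisely what motivates the modified $p_v = 1$ form~\eqref{Ohmp1} of Ohm's law.
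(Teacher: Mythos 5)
Your proposal is correct and follows essentially the same route as the paper's proof: testing the gyrokinetic weak form with $\psi=H_{s\,h}$ (using continuity of $H_h$ and the characteristics plus the bracket antisymmetry), testing the Poisson and Amp\`ere weak forms with $\xi=\partial\dashover{\phi}_h/\partial t$ and $\varphi\propto\partial A_{\parallel h}/\partial t$, and matching the coupling terms via the adjoint property of $\mathcal{P}_z$ and the $\partial H_h/\partial v_\parallel$ definition of $J_{\parallel h}$. The only differences are trivial bookkeeping (e.g.\ the paper takes $\varphi^{(j)}=(1/\mu_0)\,\partial A_{\parallel h}/\partial t$ to absorb the factor of $\mu_0$ directly).
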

\begin{proof}
The proof follows from Proposition 3.2 in \citet{hakim2019discontinuous}. 
We start by calculating 
\begin{equation}
    \frac{dW_{H\,h}}{dt} = \sum_{s,j}\int\limits_{\mathcal{K}_j} d\v{R}\ d\v{w}\ H_{s\,h}\pderiv{(\mathcal{J}f_{s\,h})}{t} + \mathcal{J}f_{s\,h}\pderiv{H_{s\,h}}{t}.\label{dWHdt1}
\end{equation}
The first term can be calculated by taking $\psi=H_h$ in Eq. (\ref{DGgk}) and summing over cells and species, since $\psi\in\mathcal{V}^p_h$ and $H_h\in\fillover{\mathcal{V}}_h^p\subset \mathcal{V}_h^p$:
\begin{align}
    \sum_{s,j}&\int\limits_{\mathcal{K}_j}\!\! d\v{R}\, d\v{w}\, H_{s\,h}\pderiv{(\mathcal{J}f_{s\,h})}{t}
     + \sum_{s,j}\!\oint\limits_{\partial \mathcal{K}_j} \!\!d\v{w}\, d\v{s}_R\v{\cdot}\dot{\v{R}}_h H_{s\,h}^- \widehat{\mathcal{J}f_{s\,h}}\notag\\
     &+ \sum_{s,j}\!\oint\limits_{\partial \mathcal{K}_j}\!\! d\v{R}\, ds_w\!\left(\dot{v}^H_{\parallel h}-\frac{q_s}{m_s}\pderiv{A_{\parallel h}}{t}\right) H_{s\,h}^- \widehat{\mathcal{J}f_{s\,h}} \notag \\
     &- \sum_{s,j}\int\limits_{\mathcal{K}_j}\!\!d\v{R}\, d\v{w}\, \mathcal{J}f_{s\,h} \!\left(\dot{\v{R}}_h\v{\cdot} \nabla H_{s\,h}+\dot{v}^H_{\parallel h}\pderiv{H_{s\,h}}{v_\parallel}\right) \notag\\ 
     &+ \sum_{s,j}\int\limits_{\mathcal{K}_j}\!\!d\v{R}\, d\v{w}\, \mathcal{J}f_{s\,h} \frac{q_s}{m_s}\pderiv{A_{\parallel h}}{t} \pderiv{H_{s\,h}}{v_\parallel} = 0. 
\end{align}
Here, we see why we must require $H_h$ to be continuous; we want the surface terms to vanish, which means the integrands must be continuous across cell interfaces so that the contributions from either side of the interface cancel exactly when we sum over cells. The numerical flux $\widehat{\mathcal{J}f_h}$ is by definition continuous across the interface, and we have already noted above that the phase-space characteristics $\dot{\v{R}}_h$ and $\dot{v}^H_{\parallel h}-({q}/{m})\partial{A_{\parallel h}}/\partial{t}$ are also continuous across cell interfaces. This leaves the Hamiltonian, which we require to be continuous so that the surface terms do indeed vanish. Further, the first volume term vanishes exactly because $\dot{\v{R}}_h\v{\cdot} \nabla H_{h}+\dot{v}^H_{\parallel h}\pderivInline{H_{h}}{v_\parallel}=\{H_{h},H_{h}\}=0$ by definition of the Poisson bracket. This leaves
\begin{equation}
    \sum_{s,j}\int\limits_{\mathcal{K}_j} d\v{R}\ d\v{w}\ H_{s\,h}\pderiv{(\mathcal{J}f_{s\,h})}{t}
    = 
    -\sum_{s,j}\int\limits_{\mathcal{K}_j}d\v{R}\ d\v{w}\ \mathcal{J}f_{s\,h} \frac{q_s}{m_s}\pderiv{A_{\parallel h}}{t} \pderiv{H_{s\,h}}{v_\parallel} 
    = 
    - \int\limits_{\mathcal{T}^R} d\v{R}\ \pderiv{A_{\parallel h}}{t} J_{\parallel h}, \label{dWHdt2}
\end{equation}
where here we see why we have defined $J_{\parallel h}$ using the derivative of $H_h$ instead of $v_\parallel$, as noted after Eq. (\ref{Jparh}). We now have the desired result for this term.
For the second term in Eq. (\ref{dWHdt1}), we have
\begin{align}
    \sum_{s,j}\int\limits_{\mathcal{K}_j} d\v{R}\ d\v{w}\  \mathcal{J}f_{s\,h}\pderiv{H_{s\,h}}{t} = \sum_{s,j}\int\limits_{\mathcal{K}_j} d\v{R}\ d\v{w}\  \mathcal{J}f_{s\,h} q_s\mathcal{P}_z[\pderiv{\dashover{\phi}\mskip0.01\thinmuskip_h}{t}] = \int\limits_{\mathcal{T}^R}d\v{R}\ \sigma_{g\,h}\mathcal{P}_z[\pderiv{\dashover{\phi}\mskip0.01\thinmuskip_h}{t}]. \label{dWHdt2_h}
\end{align}
Thus we have
\begin{equation}
    \frac{dW_{H\,h}}{dt} =- \int\limits_{\mathcal{T}^R} d\v{R}\ \pderiv{A_{\parallel h}}{t} J_{\parallel h}+ \int\limits_{\mathcal{T}^R}d\v{R}\ \sigma_{g\,h}\mathcal{P}_z[\pderiv{\dashover{\phi}\mskip0.01\thinmuskip_h}{t}], \label{dWHdt_h}
\end{equation}
which is consistent with Eq. (\ref{dWHdt}).

Next, we calculate 
\begin{align}
    \frac{dW_{E\,h}}{dt} = \sum_{j} \int\limits_{\mathcal{K}^R_j} d\v{R}\ \epsilon_{\perp}\nabla_\perp\dashover{\phi}\mskip0.01\thinmuskip_h \v{\cdot} \nabla_\perp \pderiv{\dashover{\phi}\mskip0.01\thinmuskip_h}{t} = \int\limits_{\mathcal{T}^R}d\v{R}\ \mathcal{P}^*_z[\sigma_{g\,h}]\pderiv{\dashover{\phi}\mskip0.01\thinmuskip_h}{t}=\int\limits_{\mathcal{T}^R}d\v{R}\ \sigma_{g\,h}\mathcal{P}_z[\pderiv{\dashover{\phi}\mskip0.01\thinmuskip_h}{t}], \label{dWEdt_h}
\end{align}
where we have used $\xi^{(j)}=\pderivInline{\dashover{\phi}\mskip0.01\thinmuskip_h}{t}$ in Eq. (\ref{FEMpoisson}) to make the second equality, noting that the surface term vanishes upon summing over cells because $\dashover{\phi}_h\in\dashover{\mathcal{X}}^p_h$ is continuous in the perpendicular directions. Here, we see why we modified the right-hand side of Eq. (\ref{FEMpoisson}) with $\mathcal{P}_z^*$, so that the resulting term in Eq. (\ref{dWEdt_h}) matches the one in Eq. (\ref{dWHdt2_h}).

Finally, we calculate
\begin{align}
    \frac{dW_{B\,h}}{dt} = \sum_{j} \int\limits_{\mathcal{K}^R_j} d\v{R}\ \frac{1}{\mu_0}\nabla_\perp A_{\parallel h} \v{\cdot} \nabla_\perp \pderiv{A_{\parallel h}}{t}  = \int\limits_{\mathcal{T}^R}d\v{R}\ \pderiv{A_{\parallel h}}{t} J_{\parallel h},
\end{align}
where we have used $\varphi^{(j)}=({1}/{\mu_0})\partial{A_{\parallel h}}/\partial{t}$ in Eq. (\ref{FEMampere-local}) to make the second equality, again noting that the 
surface term vanishes upon summing over cells because $\pderivInline{A_{\parallel h}}{t}\in\dashover{\mathcal{X}}^p_h$ is continuous in the perpendicular directions.

We now have conservation of discrete total energy:
\begin{equation}
    \frac{dW_h}{dt} = \frac{dW_{H\,h}}{dt}-\frac{dW_{E\,h}}{dt} + \frac{dW_{B\,h}}{dt} = 0.
\end{equation}
We note that this proof did not rely on the particular choice of numerical flux function.
\end{proof}
\begin{proposition}
The discrete system exactly conserves the $L_2$ norm of the distribution function when using a central flux, while the distribution function $L_2$ norm monotonically decays when using an upwind flux.
\end{proposition}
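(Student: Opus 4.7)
My plan is to parallel the continuous-time $L_2$ calculation in the discrete setting by taking $\psi = f_h$ as the test function in the discrete weak form Eq.~(\ref{DGgk}) and summing over all cells. This is admissible because $f_h \in \mathcal{V}^p_h$. With sources and collisions omitted and $\mathcal{J} \approx B$ time-independent, the time-derivative volume term collects into $\tfrac{1}{2}\, dM_h/dt$, where $M_h = \sum_s \int_{\mathcal{T}} d\v{R}\, d\v{w}\, \mathcal{J} f_{s\,h}^2$, so the goal becomes showing that the advection contributions either cancel (central flux) or produce a negative-semidefinite sum (upwind flux).

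Next I would rewrite the two volume advection terms using $f_h \nabla f_h = \tfrac{1}{2}\nabla f_h^2$ and $f_h\, \partial f_h/\partial v_\parallel = \tfrac{1}{2}\, \partial f_h^2/\partial v_\parallel$ inside each cell (legitimate since $f_h$ is a polynomial there), and integrate by parts once more to turn them into cell-boundary integrals of $\tfrac{1}{2}\mathcal{J}(f_h^-)^2$ against $\dot{\v{R}}_h$ and $\dot{v}^H_{\parallel h} - (q/m)\partial A_{\parallel h}/\partial t$, plus a volume remainder proportional to $\nabla\cdot(\mathcal{J}\dot{\v{R}}_h) + \partial_{v_\parallel}[\mathcal{J}(\dot{v}^H_{\parallel h} - (q/m)\partial A_{\parallel h}/\partial t)]$. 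Because the characteristics are generated by the Poisson bracket acting on the continuous Hamiltonian $H_h \in \fillover{\mathcal{V}}_h^p$, this discrete phase-space divergence vanishes identically, exactly as in the continuum Liouville argument.

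All that remains are surface integrals. At each interior face I would pair contributions from the two adjacent cells, exploiting that the numerical flux $\widehat{\mathcal{J}f_h}$ is single-valued and that both $\mathcal{J}$ and $\v{n}\cdot\gv{\alpha}_h$ (with $\gv{\alpha}_h$ the relevant phase-space characteristic) are continuous across the interface. The face contribution collapses to $\int \v{n}\cdot\gv{\alpha}_h\, (f_h^+ - f_h^-)\bigl[\widehat{\mathcal{J}f_h} - \tfrac{1}{2}\mathcal{J}(f_h^+ + f_h^-)\bigr]$. For the central flux $\widehat{\mathcal{J}f_h} = \tfrac{1}{2}\mathcal{J}(f_h^+ + f_h^-)$, the bracket vanishes and $dM_h/dt = 0$. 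For the upwind flux the bracket equals $-\tfrac{1}{2}\, \text{sgn}(\v{n}\cdot\gv{\alpha}_h)\, \mathcal{J}(f_h^+ - f_h^-)$, giving $-\tfrac{1}{2}\, \mathcal{J}|\v{n}\cdot\gv{\alpha}_h|(f_h^+ - f_h^-)^2 \le 0$ at every interior face. Periodic or vanishing boundary conditions kill the remaining global boundary terms, yielding $dM_h/dt \le 0$.

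The main obstacle is the interface bookkeeping: one must carefully combine the outward-normal contributions from the two adjacent cells across both the configuration-space and $v_\parallel$ faces, and verify the upwind sign factor yields the $(f_h^+ - f_h^-)^2$ dissipation structure uniformly. The three structural ingredients needed --- time-independence of $\mathcal{J}$, continuity of $H_h$ (and hence of $\dot{\v{R}}_h$ and $\dot{v}^H_{\parallel h}-(q/m)\partial A_{\parallel h}/\partial t$), and single-valuedness of $\widehat{\mathcal{J}f_h}$ --- are already established in the paper, so once the face accounting is done the remaining manipulation is algebraic.
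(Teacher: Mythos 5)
Your argument is correct and is essentially the standard $L_2$ (energy-method) proof that the paper itself defers to \citet{hakim2019discontinuous} (their Proposition 3.3): test with $\psi=f_h$, use the cell-wise chain rule and the phase-space incompressibility of the bracket-generated characteristics, and pair interface contributions so that the central flux cancels exactly while the upwind flux leaves the non-positive jump term $-\tfrac{1}{2}\,\mathcal{J}\,|\v{n}\v{\cdot}\gv{\alpha}_h|\,(f_h^+-f_h^-)^2$. The only ingredient worth stating explicitly is that $\mathcal{J}\approx B$ is treated as a continuous, time-independent background factor at cell interfaces (so $\widehat{\mathcal{J}f_h}$ and the single-valued $\v{n}\v{\cdot}\gv{\alpha}_h$ combine as you assume), which is consistent with the paper's treatment and with the continuity of $H_h\in\fillover{\mathcal{V}}^p_h$.
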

\begin{proof}
The proof is given as Proposition 3.3 in \citet{hakim2019discontinuous}.
\end{proof}
\begin{proposition}\label{prop:entropy}
  If the discrete distribution function $f_h$ remains positive
  definite, then the discrete scheme grows the discrete entropy
  monotonically,
  \begin{align}
    -\frac{d}{dt}\int_{\mathcal{T}} d\v{R}d\v{w}\ \mathcal{J}f_h \ln(f_h) \ge 0.
  \end{align}
\end{proposition}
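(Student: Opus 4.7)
The plan is to mimic the proof of the $L_2$-decay property in Proposition 3 (see Proposition 3.3 of \citet{hakim2019discontinuous}), replacing the quadratic weight $f_h$ by the logarithmic weight $-(1+\ln f_h)$. Positivity of $f_h$ is essential here: it guarantees that $\ln f_h$ and its cell-interface traces $\ln f_h^\pm$ are well defined and finite. The central step is to substitute $\psi = -(1+\ln f_h)$ in the discrete weak form (\ref{DGgk}) and sum over all cells and species. This substitution is formal, because $\ln f_h \notin \mathcal{V}_h^p$; a strict treatment would project $\psi$ onto $\mathcal{V}_h^p$ and control the residual, but the essential entropy-production mechanism is dissipation at cell interfaces and survives the projection.

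Given that choice of $\psi$, the identity $(1+\ln f_h)\partial_t(\mathcal{J}f_h) = \partial_t(\mathcal{J}f_h\ln f_h)$ combined with particle conservation (Proposition 1) converts the time-derivative contribution to exactly $-(d/dt)\int_{\mathcal{T}} d\v{R}\,d\v{w}\, \mathcal{J}f_h\ln f_h$, the quantity whose sign I want to establish. The interior volume terms simplify via $\nabla\psi = -\nabla f_h/f_h$ and $\partial \psi/\partial v_\parallel = -f_h^{-1}\,\partial f_h/\partial v_\parallel$ into cell integrals of $\mathcal{J}\dot{\v{R}}_h\v{\cdot}\nabla f_h$ and $\mathcal{J}(\dot{v}^H_{\parallel h}-(q/m)\partial A_{\parallel h}/\partial t)\,\partial f_h/\partial v_\parallel$. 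Integrating these by parts once more on each cell and invoking the divergence-free character of the discrete Hamiltonian phase-space flow---a consequence of $H_h\in\fillover{\mathcal{V}}_h^p$ and the Poisson-bracket structure, used identically in the energy proof of Proposition 2---reduces the volume contributions to boundary traces of $\mathcal{J}f_h^-$ (and its $v_\parallel$ analogue) weighted by the characteristic normal velocity.

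Pairing these volume-derived boundary contributions with the original surface terms of (\ref{DGgk}) at each shared interface, the $\ln f_h^-$ and $\mathcal{J}f_h^-$ traces from opposite sides assemble into a jump proportional to $\gv{\alpha}_h\v{\cdot}\v{n}\left[\widehat{\mathcal{J}f_h}(\ln f_h^+ - \ln f_h^-) - (\mathcal{J}f_h^+ - \mathcal{J}f_h^-)\right]$. Substituting the upwind flux $\widehat{f_h} = \tfrac12(f_h^+ + f_h^-) - \tfrac12\operatorname{sgn}(\v{n}\v{\cdot}\gv{\alpha}_h)(f_h^+ - f_h^-)$ collapses each such jump to $-|\gv{\alpha}_h\v{\cdot}\v{n}|\,\mathcal{J}\left[f_h^{\mathrm{up}}\ln(f_h^{\mathrm{up}}/f_h^{\mathrm{dn}}) - (f_h^{\mathrm{up}}-f_h^{\mathrm{dn}})\right]$, where $\mathrm{up}$ and $\mathrm{dn}$ denote the upwind and downwind traces. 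The bracketed relative-entropy-type quantity is non-negative by strict convexity of $x\mapsto x\ln x$ (equivalently, $\ln(x/y)\leq x/y-1$), so the net contribution from each interface to $(d/dt)\int \mathcal{J}f_h\ln f_h$ is non-positive, and summing over all interfaces yields the claimed monotone entropy growth. The principal obstacle is the non-polynomial nature of $-(1+\ln f_h)$ as a test function; a fully rigorous version would require either a careful projection estimate or restriction to a setting (e.g.\ piecewise-constant $f_h$) in which the logarithmic test function becomes admissible.
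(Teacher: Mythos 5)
The paper itself offers no argument for this proposition --- it simply defers to Proposition 3.4 of \citet{hakim2019discontinuous} --- so the comparison is really with that cited proof. Your mechanism is the natural and standard one for this family of results: test with the entropy variable $-(1+\ln f_h)$, use positivity to make the logarithm and its traces meaningful, use particle conservation to dispose of the ``$+1$'' piece, use the Liouville (divergence-free) property of the discrete characteristics inside each cell to convert the volume terms to boundary traces, and then pair those traces with the upwind surface fluxes so that each interface contributes a relative-entropy-type jump whose sign follows from convexity of $x\ln x$. That is the right dissipation mechanism, and your bookkeeping of the interface term (upwind value times $\ln(f_h^{\mathrm{up}}/f_h^{\mathrm{dn}})$ minus the jump of $f_h$) is the correct convexity inequality.

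The genuine gap is the one you flag but then wave away: for $p\ge 1$, $\psi=-(1+\ln f_h)\notin\mathcal{V}_h^p$, so the discrete weak form (\ref{DGgk}) simply cannot be evaluated with this $\psi$. The only term for which the substitution is free is the time-derivative term, because $\partial(\mathcal{J}f_h)/\partial t$ lies in $\mathcal{V}_h^p$ and hence only the $L_2$ projection of $\psi$ matters there. Everywhere else the projection is not harmless: with $\mathcal{P}[\,1+\ln f_h]$ in place of $1+\ln f_h$, the chain-rule identities $\nabla\psi=-\nabla f_h/f_h$ and $\partial\psi/\partial v_\parallel=-f_h^{-1}\partial f_h/\partial v_\parallel$ fail, the exact cancellation of the volume terms via the Liouville property no longer reduces them to boundary traces of $f_h$, and the interface inequality needs the true traces $\ln f_h^{\pm}$ rather than traces of a polynomial approximation to $\ln f_h$, so its sign is no longer guaranteed. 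Your assertion that the entropy-production mechanism ``survives the projection'' is precisely the statement that requires proof, and it is not obviously true for high-order elements; as written, the argument is rigorous only in the piecewise-constant setting you mention at the end (where $\ln f_h$ is itself admissible), and is otherwise a formal calculation rather than a proof of the proposition as stated.
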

\begin{proof}
The proof is given as Proposition 3.4 in \citet{hakim2019discontinuous}.
\end{proof}

\section{Time-discretization scheme}\label{sec:timedisc}
So far we have considered only the discretization of the phase space for the system, and we have considered the conservation properties of the scheme in the continuous-time limit. Indeed, in the discrete-time system the conservation properties are no longer exact due to truncation error in the non-reversible time-stepping methods that we consider. However the errors will be \emph{independent} of the phase-space discretization, and errors can be reduced by taking a smaller time step or by using a high-order time-stepping scheme to improve convergence. Following the approach of the Runge-Kutta discontinuous Galerkin method \citep{cockburn1998runge,cockburn2001runge,shu2009discontinuous}, we have implemented several explicit multi-stage strong stability-preserving Runge-Kutta high-order schemes \citep{gottlieb2001strong,shu2002survey}; the results in this paper use a three-stage, third-order scheme (SSP-RK3). These schemes have the property that a high-order scheme can be composed of several forward-Euler stages. Thus we will detail our time-stepping scheme for a single forward-Euler stage, which can then be combined into a multi-stage high-order scheme. Note that although we present the time-discretization scheme in this section in terms of our DG phase-space discretization, the scheme could be generalized to any spatial discretization.

Given $f_h^n=f_h(t=t_n)$ and $A_{\parallel h}^n=A_{\parallel h}(t=t_n)$ at time $t_n$, the steps of the forward-Euler scheme to advance to time $t_{n+1}=t_n+\Delta t$ are as follows:
\begin{enumerate}
    \item Calculate $\dashover{\phi}_h^n$ using Eq. (\ref{FEMpoisson}), and then $\phi_h^n = \mathcal{P}_z[\dashover{\phi}_h^n]$ using Eq. (\ref{smooth}).
    \begin{gather}
\int\limits_{\mathcal{K}^R_j} d\v{R}\ \epsilon_{\perp} \nabla_\perp \dashover{\phi}\mskip0.01\thinmuskip_h^n \v{\cdot} \nabla_\perp \xi^{(j)} - \oint\limits_{\partial \mathcal{K}^R_j} d\v{s}_R\v{\cdot}\nabla_\perp\dashover{\phi}\mskip0.01\thinmuskip_h^n\ \xi^{(j)}\ \epsilon_{\perp} = \int\limits_{\mathcal{K}^R_j} d\v{R}\ \xi^{(j)}\ \mathcal{P}^*_z[\sigma_{g\,h}^n] \label{FEMpoisson-local-dt} \\
    \int_{\mathcal{T}^z_j}d\v{R}\ \chi\ \phi_h^n = \int_{\mathcal{T}^z_j}d\v{R}\ \chi \dashover{\phi}\mskip0.01\thinmuskip_h^n
\end{gather}
    \item Calculate the partial EMGK update $\left({\partial(\mathcal{J}f_h)}^\star/{\partial t}\right)^n$ using Eq. (\ref{partialGK}).
\begin{align}
    \int\limits_{\mathcal{K}_j}d\v{R}\ d\v{w}\ \psi \left(\pderiv{(\mathcal{J}f_h)}{t}^\star\right)^n = &-\oint\limits_{\partial \mathcal{K}_j} d\v{w}\ d\v{s}_R\v{\cdot}\dot{\v{R}}_h^n \psi^- \widehat{\mathcal{J}f_h}^n+\int\limits_{\mathcal{K}_j}d\v{R}\ d\v{w}\ \mathcal{J}f_h^n \dot{\v{R}}_h^n\v{\cdot} \nabla \psi \notag \\&\quad+ \int\limits_{\mathcal{K}_j}d\v{R}\ d\v{w}\ \mathcal{J}f_h^n \dot{v}^{H\,n}_{\parallel h} \pderiv{\psi}{v_\parallel} + \int\limits_{\mathcal{K}_j}d\v{R}\ d\v{w}\ \psi\left(\mathcal{J}C[f_h^n] + \mathcal{J}S^n\right) \label{partialGK-dt}
\end{align}
    \item Calculate $\left({\partial A_{\parallel h}}/{\partial t}\right)^n$ from Eq. (\ref{Ohmp2}) [\textit{for $p_v=1$, this is only a provisional value, which we will denote as $(\widetilde{\partial{A_{\parallel h}}/\partial{t}})^n$}].
\begin{align}
        \int\limits_{\mathcal{K}^R_j} d\v{R}\  &\nabla_\perp \left(\pderiv{A_{\parallel h}}{t}\right)^n \v{\cdot} \nabla_\perp \varphi^{(j)} - \oint\limits_{\partial \mathcal{K}^R_j} d\v{s}_R\v{\cdot}\nabla_\perp \left(\pderiv{A_{\parallel h}}{t}\right)^n \varphi^{(j)} \notag \\
        &+
    \int\limits_{\mathcal{K}_j^R}d\v{R}\ \varphi^{(j)}\left(\pderiv{A_{\parallel h}}{t}\right)^n \sum_s \frac{\mu_0 q_s^2}{m_s}\int\limits_{\mathcal{T}^v} d\v{w}\ \mathcal{J} f_{s\,h}^n \notag \\
    &=\mu_0\sum_s q_s \int\limits_{\mathcal{K}_j^R} d\v{R}\ \varphi^{(j)}\int\limits_{\mathcal{T}^v}d\v{w}\ v_\parallel \left(\pderiv{(\mathcal{J} f_{s\,h})}{t}^\star\right)^n
\end{align}
    \item ($p_v=1\ only$) Use the provisional $(\widetilde{\partial{A_{\parallel h}}/\partial{t}})^n$ from step 3 to calculate the upwinding direction in the surface terms in Eq. (\ref{Ohmp1}), and then calculate $({\partial{A_{\parallel h}}/\partial{t}})^n$.
\begin{align}
        \int\limits_{\mathcal{K}^R_j} d\v{R}\  &\nabla_\perp \left(\pderiv{A_{\parallel h}}{t}\right)^n \v{\cdot} \nabla_\perp \varphi^{(j)} - \oint\limits_{\partial \mathcal{K}^R_j} d\v{s}_R\v{\cdot}\nabla_\perp \left(\pderiv{A_{\parallel h}}{t}\right)^n \varphi^{(j)} \notag \\
    &- \int\limits_{\mathcal{K}_j^R}d\v{R}\ \varphi^{(j)}\left(\pderiv{A_{\parallel h}}{t}\right)^n \sum_s \frac{\mu_0 q_s^2}{m_s}\sum_i\oint\limits_{\partial\mathcal{K}^v_i} ds_w \bar{v}_\parallel^- \widehat{\mathcal{J} f_{s\,h}}^n \notag \\
    &= \mu_0\sum_s q_s \int\limits_{\mathcal{K}^R_j} d\v{R}\ \varphi^{(j)} \left[\int\limits_{\mathcal{T}^v}d\v{w}\ \bar{v}_\parallel \left(\pderiv{(\mathcal{J}f_{s\,h})}{t}^\star\right)^n-\sum_i\oint\limits_{\partial \mathcal{K}_i^v} ds_w\  \bar{v}_\parallel^- {\dot{v}^{H\,n}_{\parallel h}} \widehat{\mathcal{J}f_{s\,h}}^n \right]
\end{align}
    \item Calculate the full EMGK update, $\left(\pderiv{(\mathcal{J}f_h)}{t}\right)^{n}$, using Eq. (\ref{DGstar}). For $p_v=1$, the provisional $\left(\widetilde{\pderiv{A_{\parallel h}}{t}}\right)^n$ from step 3 should again be used to calculate the upwinding direction in the surface terms for consistency.
\begin{align}
    &\int\limits_{\mathcal{K}_j}\!\!d\v{R}\, d\v{w}\, \psi
    \left(\pderiv{(\mathcal{J}f_h)}{t}\right)^n = \int\limits_{\mathcal{K}_j}\!\!d\v{R}\, d\v{w}\, \psi \left(\pderiv{(\mathcal{J}f_h)}{t}^\star\right)^n \notag \\
    &\qquad- \oint\limits_{\partial \mathcal{K}_j}\!\! d\v{R}\, ds_w\!\left[\dot{v}^{H\,n}_{\parallel h}-\frac{q}{m}\left(\pderiv{A_{\parallel h}}{t}\right)^n\right] \psi^- \widehat{\mathcal{J}f_h}^n
    - \int\limits_{\mathcal{K}_j}\!\!d\v{R}\, d\v{w}\, \mathcal{J}f_h^n \frac{q}{m}\left(\pderiv{A_{\parallel h}}{t}\right)^n \pderiv{\psi}{v_\parallel}. \label{DGstar-dt}
\end{align}
    \item Advance $f_h$ and $A_{\parallel h}$ to time $t_{n+1}$.
\begin{gather}
   \mathcal{J} f_h^{n+1} = \mathcal{J} f_h^n + \Delta t \left(\pderiv{(\mathcal{J} f_h)}{t}\right)^n \label{f-dt} \\
   A_{\parallel h}^{n+1} = A_{\parallel h}^n + \Delta t \left(\pderiv{A_{\parallel h}}{t}\right)^n \label{Apar-dt}
\end{gather}
\end{enumerate}
Note that the parallel Amp\`ere equation, Eq. (\ref{FEMampere-local}), is only used to solve for the initial condition of $A_{\parallel h}(t=0)$. For all other times, Eq. (\ref{Apar-dt}) is used to advance $A_{\parallel h}$. This prevents the system from being over-determined and ensures consistency between $A_{\parallel h}$ and $\pderivInline{A_{\parallel h}}{t}$.

\section{Linear benchmarks}
\label{sec:linear}

\subsection{Kinetic Alfv\'en wave}\label{res:kaw}
As a first benchmark of our electromagnetic scheme, we consider the kinetic Alfv\'en wave.
In a slab (straight background magnetic field) geometry, with stationary ions (assuming $\omega\gg k_\parallel v_{ti}$), the gyrokinetic equation for electrons reduces to
\begin{equation}
    \pderiv{f_e}{t} = \{H_e,f_e\}-\frac{e}{m}\pderiv{f_e}{v_\parallel}\pderiv{A_\parallel}{t} = - v_\parallel \pderiv{f_e}{z} {-} \frac{e}{m}\pderiv{f_e}{v_\parallel}\left(\pderiv{\phi}{z}+\pderiv{A_\parallel}{t} \right).
\end{equation}
Taking a single Fourier mode with perpendicular wavenumber $k_\perp$ and parallel wavenumber $k_\parallel$, the field equations become
\begin{gather}
    k_\perp^2 \frac{m_i n_{0}}{B^2}\phi = e n_0-e\int dv_\parallel\ f_e \label{alf-poisson} \\
    k_\perp^2 A_\parallel = -\mu_0 e \int dv_\parallel\ v_\parallel f_e \label{alf-ampere} \\
    \left(k_\perp^2 + \frac{\mu_0 e^2}{m_e}\int dv_\parallel f_e\right)\pderiv{A_{\parallel}}{t} = -\mu_0 e\int dv_\parallel\ v_\parallel \{H_e, f_e\}
\end{gather}
After linearizing the gyrokinetic equation by assuming a uniform Maxwellian background with density $n_0$ and temperature $m_e v_{te}^2$, so that $f_e = F_{Me} + \delta f_e$, the dispersion relation becomes
\begin{equation}
    \omega^2\left[1+\frac{\omega}{\sqrt{2}k_\parallel v_{te}}Z\left(\frac{\omega}{\sqrt{2}k_\parallel v_{te}}\right)\right] = \frac{k_\parallel^2 v_{te}^2}{\hat{\beta}}\left[1+k_\perp^2\rho_s^2+\frac{\omega}{\sqrt{2}k_\parallel v_{te}}Z\left(\frac{\omega}{\sqrt{2}k_\parallel v_{te}}\right)\right], \label{kawdisp}
\end{equation}
where $\hat{\beta}=(\beta_e/2) m_i/m_e$, with $\beta_e = 2\mu_0 n_0 T_e/B^2$, $v_{te}=\sqrt{T_e/m_e}$ is the electron thermal speed, $\rho_s$ is the ion sound gyroradius, and $Z(x)$ is the plasma dispersion function \citep{fried1961plasma}. In the limit $k_\perp\rho_s\ll1$ the wave becomes the standard shear Alfv\'en wave from magnetohydrodynamics (MHD), which is an undamped wave with frequency $\omega=k_\parallel v_A$, where $v_A=v_{te}/\hat{\beta}^{1/2}$ is the Alfv\'en velocity. For larger values of $k_\perp \rho_s$, the mode is damped by kinetic effects.

\begin{figure}
    \hspace{-.5cm}
    \centering
    \includegraphics[width=.48\textwidth]{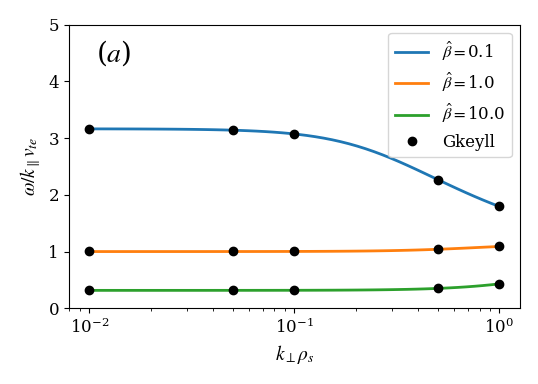}
    \includegraphics[width=.48\textwidth]{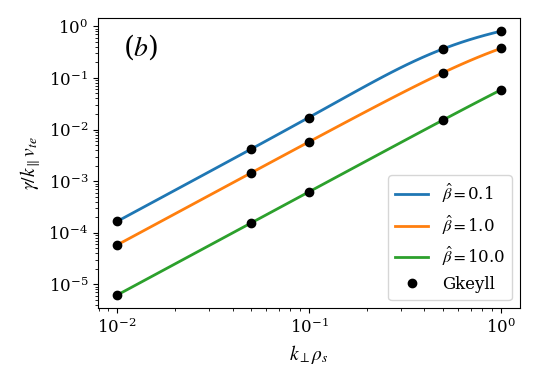}
    \caption{Real frequencies ($a$) and damping rates ($b$) for the kinetic Alfv\'en wave vs $k_\perp \rho_s$. Solid lines are the exact values from Eq. (\ref{kawdisp}) for three different values of $\hat{\beta}=(\beta_e/2) m_i/m_e$, and black dots are the numerical results from \gke.}
    \label{fig:kaw}
\end{figure}

\begin{figure}
    \centering
    \includegraphics[width=.6\textwidth]{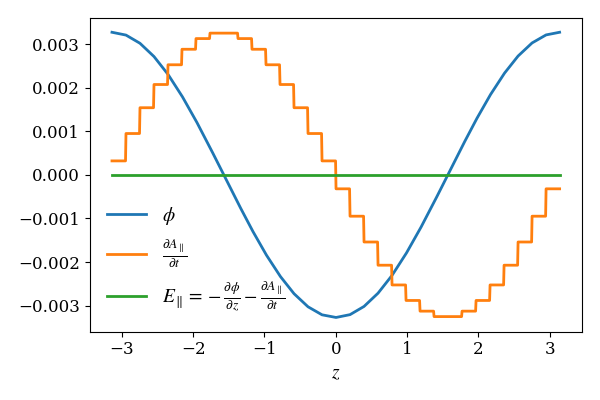}
    \caption{$\phi_h$ (blue) and $\partial{A_{\parallel h}}/\partial {t}$ (yellow) for the case with $\hat{\beta}=10$ and $k_\perp \rho_s=0.01$. The amplitude of $E_{\parallel h}$ (green) is $\sim 10^{-9}$.}
    \label{fig:kaw-fields}
\end{figure}

In Figure \ref{fig:kaw}, we show the real frequencies ($a$) and damping rates ($b$) obtained by solving Eq. (\ref{kawdisp}) for a few values of $\hat{\beta}$. We also show numerical results from \gke, which match the analytic results very well. These results are a good indication that our scheme avoids the Amp\`ere cancellation problem, which can cause large errors for modes with $\hat{\beta}/k_\perp^2\rho_s^2\gg1$ (see Appendix \ref{app:cancel}); we see no such errors, even for the case with $\hat{\beta}/k_\perp^2\rho_s^2=10^5$. Each \gke simulation was run using piecewise-linear basis functions ($p=1$) in a reduced dimensionality mode with one configuration space dimension and one velocity space dimension, with $(N_z,N_{v_\parallel})=(32,64)$ the number of cells in each dimension. The perpendicular dimensions ($x$ and $y$), which appear only in the field equations in this simple system, were handled by replacing $\nabla_\perp$ by $k_\perp$, as in Eqs. (\ref{alf-poisson}) and (\ref{alf-ampere}). We use periodic boundary conditions in $z$ and zero-flux boundary conditions in $v_\parallel$.

We also show in Figure \ref{fig:kaw-fields} the fields $\phi_h$ and $\pderivInline{A_{\parallel h}}{t}$ for the case with $\hat{\beta}=10$ and $k_\perp \rho_s=0.01$, which gives $\hat{\beta}/k_\perp^2\rho_s^2=10^5$. For these parameters the system is near the MHD limit, which means we should expect $E_\parallel=-\pderivInline{\phi}{z}-\pderivInline{A_\parallel}{t}\approx0$. While this condition is never enforced, getting the physics correct requires the scheme to allow $\pderivInline{\phi_h}{z}\approx-\pderivInline{A_{\parallel h}}{t}$. The fact that our scheme allows discontinuities in $A_\parallel$ in the parallel direction is an advantage in this case. Because $\phi_h$ is piecewise-linear here, $\pderivInline{\phi_h}{z}$ is piecewise-constant; this is necessarily discontinuous for non-trivial solutions. Thus the scheme produces a piecewise-constant $\pderivInline{A_{\parallel h}}{t}$ in this MHD-limit case, as shown in Figure \ref{fig:kaw-fields}, resulting in $E_{\parallel h}\approx 0$.
If our scheme did not allow discontinuities in $A_{\parallel h}$, a continuous $\pderivInline{A_{\parallel h}}{t}$ would never be able to exactly cancel a discontinuous $\pderivInline{\phi_h}{z}$, and the resulting $E_{\parallel h}\neq0$ would make the solution inaccurate. Notably, this would be the case had we chosen the Hamiltonian ($p_\parallel$) formulation of the gyrokinetic system, which uses $p_\parallel=mv_\parallel + q A_\parallel$ as the parallel velocity coordinate. This is because $A_\parallel$ is included in the Hamiltonian in the $p_\parallel$ formulation, which would require continuity of $A_{\parallel h}$ (and thereby $\pderivInline{A_{\parallel h}}{t}$) to conserve energy in our discretization scheme. 

\subsection{Kinetic ballooning mode (KBM)}
We use the kinetic ballooning mode (KBM) instability in the local limit as a second linear benchmark of our electromagnetic scheme. The dispersion relation is given by solving \citep{kim1993electromagnetic}
\begin{gather}
    \omega\left[\tau + k_\perp^2 + \Gamma_0(b) - P_0\right]\phi = \left[\tau(\omega-\omega_{*e}) - k_\parallel P_1\right]\frac{\omega}{k_\parallel}A_\parallel \label{kbmdisp1}\\
    \frac{2 k_\parallel^2 k_\perp^2}{\beta_i}A_\parallel = k_\parallel\left[k_\parallel P_1 - \tau(\omega-\omega_{*e})\right]\phi - \left[k_\parallel^2P_2 - \tau\left(\omega(\omega-\omega_{*e})-2\omega_{de}(\omega-\omega_{*e}(1+\eta_e))\right)\right]A_\parallel \label{kbmdisp2}
\end{gather}
where
\begin{gather}
    P_m = \int_0^{\infty}dv_\perp \ v_\perp\int_{-\infty}^{\infty} dv_\parallel\  \frac{1}{\sqrt{2\pi}}e^{-(v_\parallel^2+v_\perp^2)/2}(v_\parallel)^m\frac{\omega-\omega_{*i}\left[1+\eta_i(v^2/2-3/2)\right]}{\omega-k_\parallel v_\parallel - \omega_{di}(v_\parallel^2+v_\perp^2/2)}J_0^2(v_\perp\sqrt{b}),
\end{gather}
with $\tau=T_i/T_e$, $\omega_{*e}=k_y$, $\omega_{*i}=-k_y$, $\eta_s=L_{n}/L_{Ts}$ and $\Gamma_0(b)=I_0(b)e^{-b}$ with $I_0(b)=J_0(i b)$ the modified Bessel function. Here, the wavenumbers $k_y$ and $k_\parallel$ are normalized to $\rho_i$ and $L_n$, respectively, and the frequencies $\omega$ and $\omega_*$ are normalized to $v_{ti}/L_n$. In the local limit, $\omega_{ds}=\omega_{*s}L_{n}/R$ and $k_\perp=k_y$ do not vary along the field line. Note that in Eq. (\ref{kbmdisp1}) we have modified the FLR terms from \citep{kim1993electromagnetic} so that we can take $b=k_\perp^2\rightarrow0$ while keeping $k_\perp\neq0$ to neglect all FLR effects except for the first-order polarization term, which is consistent with our long-wavelength Poisson equation. 

The local limit can be achieved by simulating a helical flux tube with no magnetic shear, which gives a system with constant magnetic curvature that corresponds to $\omega_d=\text{const}$. This geometry has been previously used for SOL turbulence studies with \gke \citep{shi2019full,bernard2019gyrokinetic}, except in this section we take the boundary condition along the field lines to be periodic. We will provide further details about the helical geometry and the coordinates in the next section.

\begin{figure}
    \centering
    \includegraphics[width=.7\textwidth]{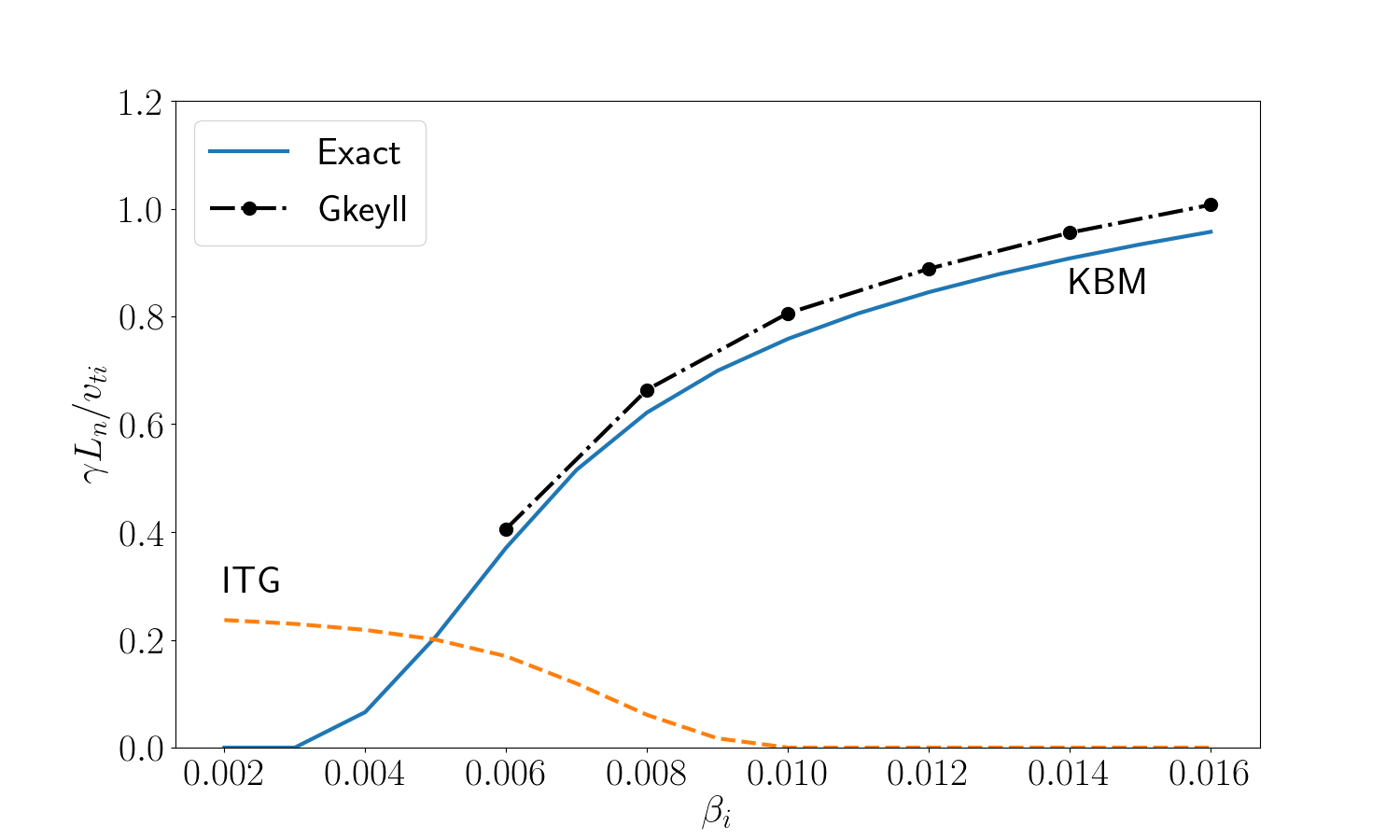}
    \caption{Growth rates for the KBM instability in the local limit, as a function of $\beta_i$, with $k_\perp \rho_i=0.5,\ k_\parallel L_n = 0.1,\ R/L_n=5,\ R/L_{Ti}=12.5,\  R/L_{Te}=10,$ and $\tau=1$. The black dots are numerical results from \gke, and the colored lines are the result of numerically solving the analytic dispersion relation given by Eqs. (\ref{kbmdisp1}-\ref{kbmdisp2}).}
    \label{fig:kbm}
\end{figure}

We show the results of \gke simulations of the KBM instability in the local-limit helical geometry for several values of $\beta_i$ in Figure \ref{fig:kbm}. The results agree well with the analytic result obtained by numerically solving Eqs. (\ref{kbmdisp1}-\ref{kbmdisp2}). The parameters $k_\perp \rho_i=0.5,\ k_\parallel L_n = 0.1,\ R/L_n=5,\ R/L_{Ti}=12.5,\  R/L_{Te}=10,\ \tau=1$ are chosen to match those used in figure 1 of \citet{kim1993electromagnetic}, although the differences in FLR terms ($b=0$) cause our growth rates to be larger than those in \citet{kim1993electromagnetic}. These are fully five-dimensional simulations with the real deuterium--electron mass ratio using piecewise-linear ($p=1$) basis functions, with $(N_x,N_y,N_z,N_{v_\parallel},N_\mu)=(1,16,16,32,16)$. The boundary conditions are periodic in the three configuration-space dimensions and zero flux in the velocity dimensions. The initial distribution function of each species is composed of a background Maxwellian with gradients in the density and temperature corresponding to the desired $L_n$ and $L_{Ts}$, plus a perturbed Maxwellian (for the electrons only) with small sinusoidal variations in the density corresponding to the desired $k_y$ and $k_\parallel$.
Note that since we are using a full-$f$ representation, the presence of a background gradient in the distribution function means that we must apply the periodic boundary conditions by first subtracting off the initial background distribution function, then applying periodicity to the perturbations only and then adding back the background distribution.
{Finally, we note that since \gke is designed primarily for nonlinear calculations, the fact that Fourier modes are not eigenfunctions of the DG discretization of the system makes these linear tests somewhat difficult for \gke. This may play a role in the small deviation of the results from the analytical theory. Because of this, Fourier modes other than the one initialized can grow and pollute the results. In particular, we have not included results from the ion temperature gradient (ITG) branch because we find that a mode with $k_\parallel=0$ grows and overcomes the initialized finite $k_\parallel$ mode before its growth rate has converged.}

\section{Nonlinear results}
\label{sec:nonlinear}

We now present preliminary nonlinear electromagnetic results from \gke. We simulate turbulence on helical, open field lines as a rough model of the tokamak scrape-off layer. 
These simulations are a direct extension of the work of \citet{shi2019full} to include electromagnetic fluctuations. As such, we use the same simulation geometry and similar NSTX-like parameters. In the non-orthogonal field-aligned geometry, $x$ is the radial coordinate, $z$ is the coordinate along the field lines, and $y$ is the binormal coordinate which labels field lines at constant $x$ and $z$. These coordinates map to physical cylindrical coordinates ($R,\varphi,Z)$ via $R=x$, $\varphi=(y/\sin\theta+z\cos\theta)/R_c$, $Z=z\sin\theta$. (Note that this fixes an error in the $\varphi(y,z)$ mapping in \citep{shi2019full}.) The field-line pitch $\sin\theta=B_v/B$ is taken to be constant, with $B_v$ the vertical component of the magnetic field (analogous to the poloidal field in typical tokamak geometry), and $B$ the total magnitude of the background magnetic field. Further, $R_c=R_0+a$ is the radius of curvature at the centre of the simulation domain, with $R_0$ the device major radius and $a$ the minor radius. As in \citet{shi2019full}, we neglect all geometrical factors arising from the non-orthogonal coordinate system, except for the assumption that perpendicular gradients are much stronger than parallel gradients so that we can approximate
\begin{equation}
    (\nabla\times\uv{b})\v{\cdot}\nabla f(x,y,z)\approx \left[(\nabla\times\uv{b})\v{\cdot} \nabla y\right]\pderiv{f}{y} = -\frac{1}{x}\pderiv{f}{y},
\end{equation}
where we have used $\v{B}=B_\text{axis}(R_0/R)\uv{e}_z$ in the last step, with $B_\text{axis}$ the magnetic field strength at the magnetic axis.

We use this geometry to simulate a {flux-tube-like domain} on the outboard side that wraps {helically around the torus} and terminates on conducting plates at each end in $z$. The simulation box is centreed at $(x,y,z)=(R_c,0,0)$ with dimensions $L_x=50\rho_{s0}\approx 14.6$ cm, $L_y=100\rho_{s0}\approx 29.1$ cm, and $L_z=L_p/\sin\theta=8$ m, where $L_p=2.4$ m and $\rho_{s0}=c_{s0}/\Omega_i$. Periodic boundary conditions are used in the $y$ direction, and a Dirichlet boundary condition $\phi=0$ is applied in x, which effectively prevents flows into the boundaries in $x$. Conducting-sheath boundary conditions are applied in the $z$ direction \citep{shi2017gyrokinetic,shi2019full}, which partially reflect one species (typically electrons) and fully absorbs the other species depending on the sign of the sheath potential. This involves solving the gyrokinetic Poisson equation to evaluate the potential at the $z$ boundary, corresponding to the sheath entrance, and using the resulting sheath potential to determine a cutoff velocity below which particles are reflected by the sheath. {Notably, our sheath boundary condition allows current fluctuations in and out of the sheath, which we will find to important later in this section. This is different from the standard logical sheath boundary condition \citep{parker1993suitable} that imposes that there is no net current to the sheath by assuming that the ion and electron currents at the sheath entrance are equal at all times.} The velocity-space grid has extents $-4v_{ts}\leq v_\parallel \leq 4 v_{ts}$ and $0\leq\mu\leq6T_{s0}/B_0$, where $v_{ts}=\sqrt{T_{s0}/m_s}$ and $B_0=B_\text{axis}R_0/R_c$. We use piecewise-linear ($p=1$) basis functions, with $(N_x,N_y,N_z,N_{v_\parallel},N_\mu)=(16,32,10,10,5)$. {Note that although the domain that we simulate is a flux tube, the simulations are not performed in the local limit; the simulations include radial variation of the magnetic field and the profiles, and are thus effectively global.}

\begin{figure}
    \centering
    \includegraphics[width=\textwidth]{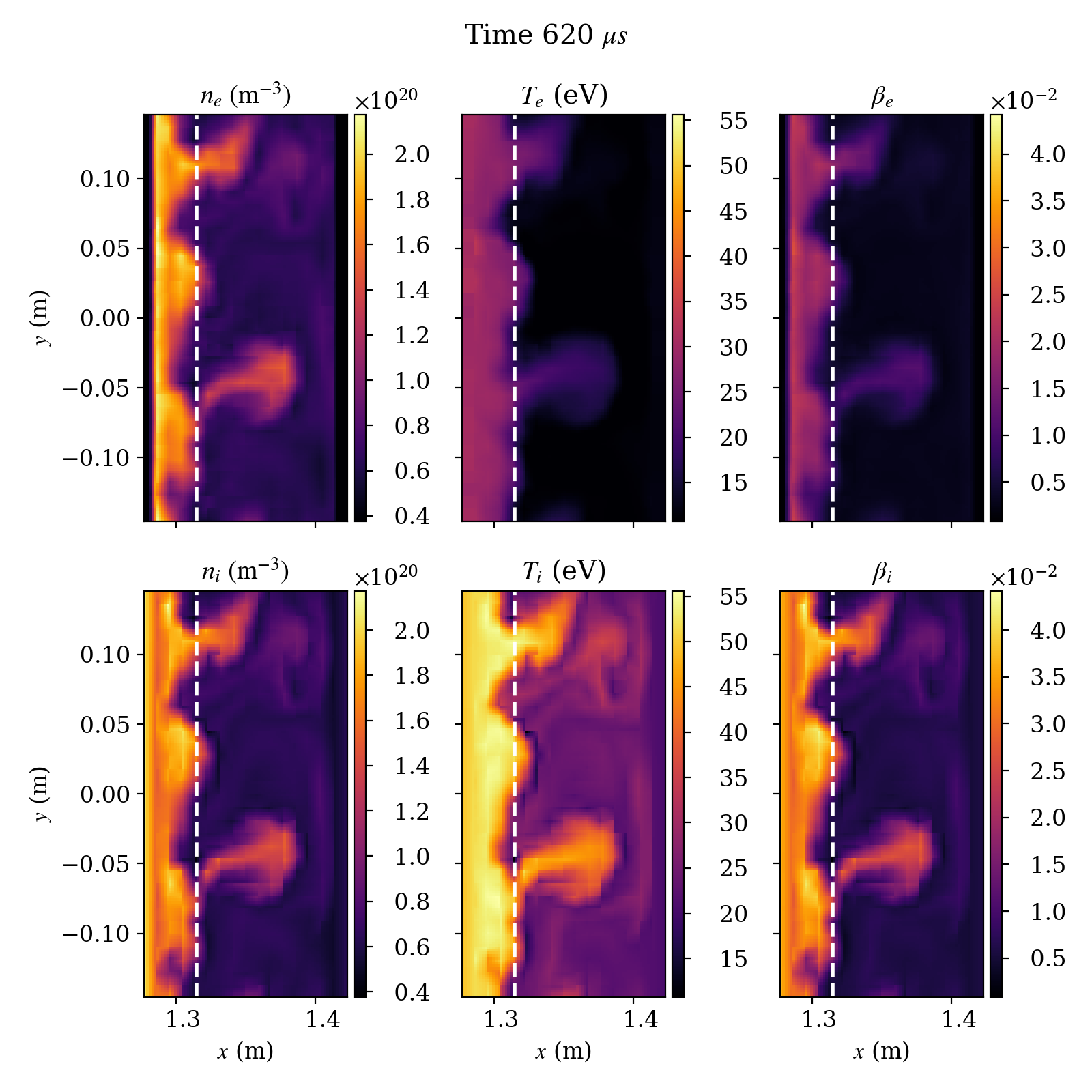}
    \caption{Snapshots from an electromagnetic simulation on open, helical field lines. From left to right, we show the density, temperature, and plasma beta of electrons (top row) and ions (bottom row). The snapshots are taken at the midplane $(z=0)$ at $t=620\ \mu$s. The dashed line indicates the
boundary between the source and SOL regions. A blob with mushroom structure is being ejected from the source region and propagating radially outward into the SOL region.}
    \label{fig:moms}
\end{figure}

\begin{figure}
    \centering
    \includegraphics[width=\textwidth]{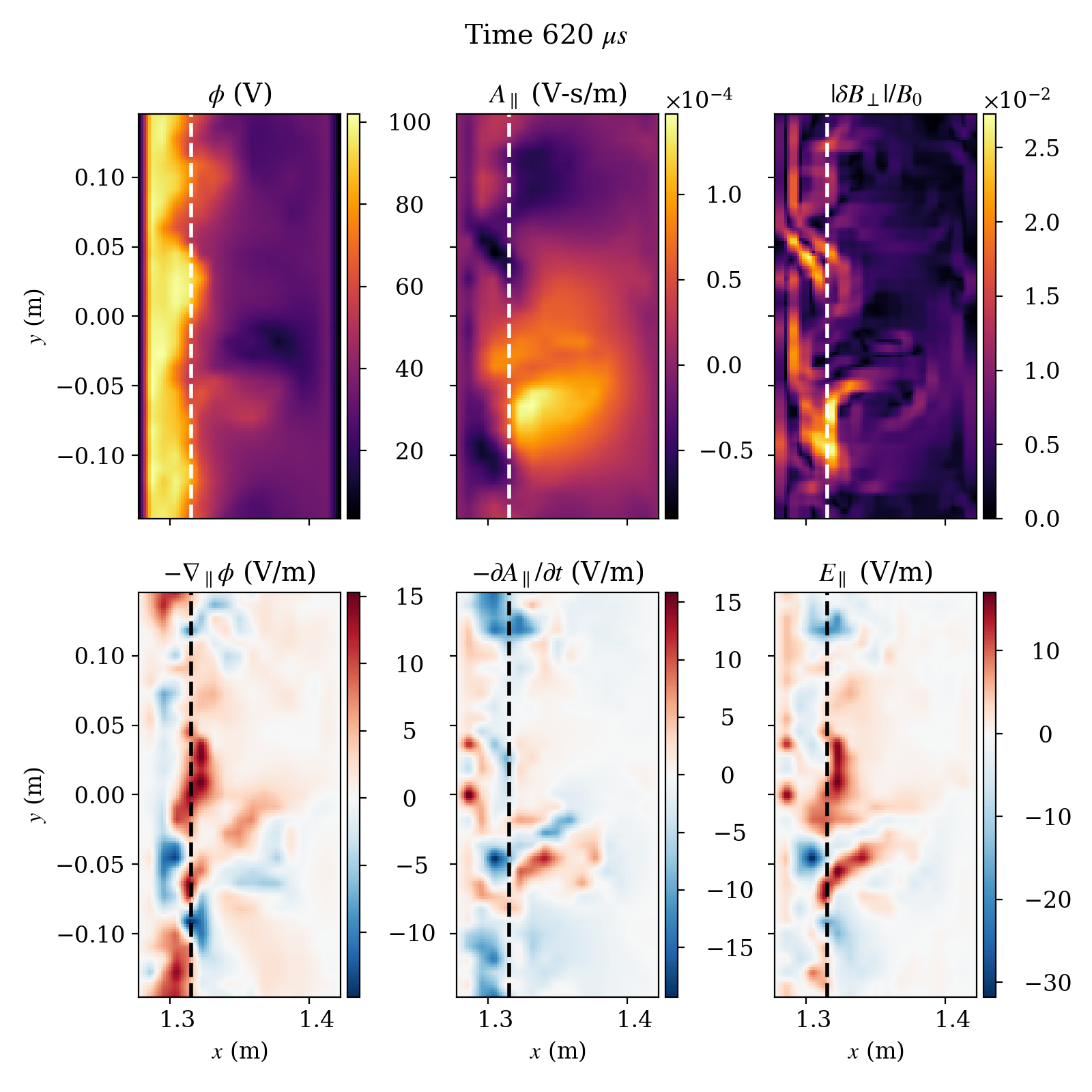}
    \caption{Snapshots (at $z=0$, $t=620\ \mu$s) of the electrostatic potential $\phi$, parallel magnetic vector potential $A_\parallel$, and normalized magnetic fluctuation amplitude $|\delta B_\perp|/B_0=|\nabla_\perp A_\parallel|/B_0$ (top row), along with the components of the parallel electric field $E_\parallel = -\nabla_\parallel \phi - \partial{A_\parallel}/\partial{t}$ (bottom row).}
    \label{fig:fields}
   \centering
    \includegraphics[width=.5\textwidth]{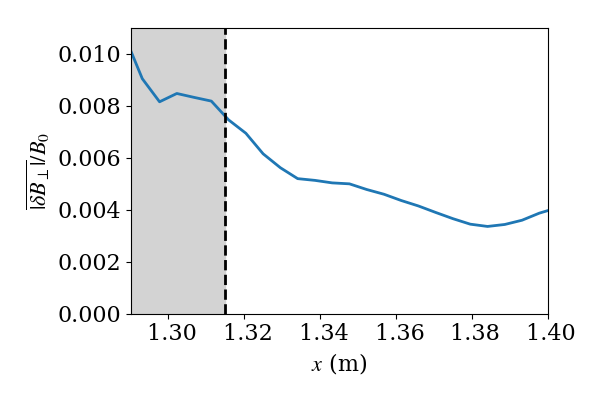}
    \caption{Radial profile of the normalized magnetic fluctuation amplitude, $|\delta B_\perp|/B_0=|\nabla_\perp A_\parallel|/B_0$, averaged in $y$, $z$ and time using data near the midplane $(|z|<0.4\ \mathrm{ m})$ over a period of 400 $\mu$s. On average, we observe magnetic fluctuations of the order of $0.5-1\%$. The source region is shaded.}
    \label{fig:deltaB-profile} 
\end{figure}

The simulation parameters are similar to those used in \citet{shi2019full}, roughly approximating an H-mode deuterium plasma in the NSTX SOL: $B_\text{axis}=0.5$ T, $R_0=0.85$ m, $a=0.5$ m, $T_{e0}=T_{i0}=40$ eV. For the particle source, we use the same form as in \citet{shi2019full} but we increase the source particle rate by a factor of 10 to access a higher $\beta$ regime where electromagnetic effects will be more important. The source is localized in the region $x<x_S +3\lambda_S$, with $x_S=R_c-0.05$ m and $\lambda_S=5\times10^{-3}$ m. The location $x=x_S +3\lambda_S$, which separates the source region from the SOL region, can be thought of as the separatrix. A floor of one tenth the peak particle source rate is used near the midplane to prevent regions of $n\ll n_0$ from developing at large $x$. Unlike in \citet{shi2019full} we do not use numerical heating to keep $f>0$ despite the fact that our DG algorithm does not guarantee positivity. While the simulations appear to be robust to negative $f$ in some isolated regions, lowering the source floor in the SOL region leads to simulation failures due to positivity issues at large $x$. A more sophisticated algorithm for ensuring positivity is left to future work.
We also artificially lower the collision frequency to one tenth the physical value to offset the increased particle source rate so that the time-step limit from collisions does not become too restrictive. Further, we model only ion--ion and electron--electron collisions, leaving cross-species collisions to future work. 

Using the novel electromagnetic scheme described in this paper, we ran a simulation in this configuration 
to $t=1$ ms, with a quasi-steady state being reached around $t=600\ \mu\text{s}$ when the sources balance losses to the end plates. For reference, the ion transit time is $\tau_i=(L_z/2)/v_{ti}\approx 50\ \mu\text{s}$. In Figure \ref{fig:moms} we show snapshots of the density, temperature and $\beta$ of electrons (top row) and ions (bottom row). The snapshots are taken at the midplane ($z=0$) at $t=620\ \mu$s. We can see a blob with a mushroom structure being ejected from the source region. We also show in Figure \ref{fig:fields} snapshots of the electromagnetic fields taken at the same time and location. We show the electrostatic potential $\phi$, the parallel magnetic vector potential $A_\parallel$ and the normalized magnetic fluctuation amplitude $|\delta B_\perp|/B_0=|\nabla_\perp A_\parallel|/B_0$ (top row), along with the components of the parallel electric field $E_\parallel = -\nabla_\parallel \phi - \pderivInline{A_\parallel}{t}$ (bottom row). Note that only $\phi$, $A_\parallel$ and $\pderivInline{A_\parallel}{t}$ are evolved quantities in the simulation, with the other quantities derived. We see that $\pderivInline{A_\parallel}{t}$ is of comparable magnitude to $\nabla_\parallel \phi$, indicating that the dynamics is in the electromagnetic regime. Significant magnetic fluctuations of over $2.5\%$ can be seen in $|\delta B_\perp|/B_0$ in this snapshot. We also show in Figure \ref{fig:deltaB-profile} the time and spatially averaged profile of magnetic fluctuations vs $x$, which shows that, on average, we observe magnetic fluctuations of the order of $0.5-1\%$. This radial profile, and similar ones that will follow, is computed by averaging in $y$ and $z$ using data near the midplane $(|z|<0.4\ \mathrm{ m})$ over the period of $600\ \mu\text{s}-1$ ms.

\begin{figure}
    \centering
    \includegraphics[width=\textwidth]{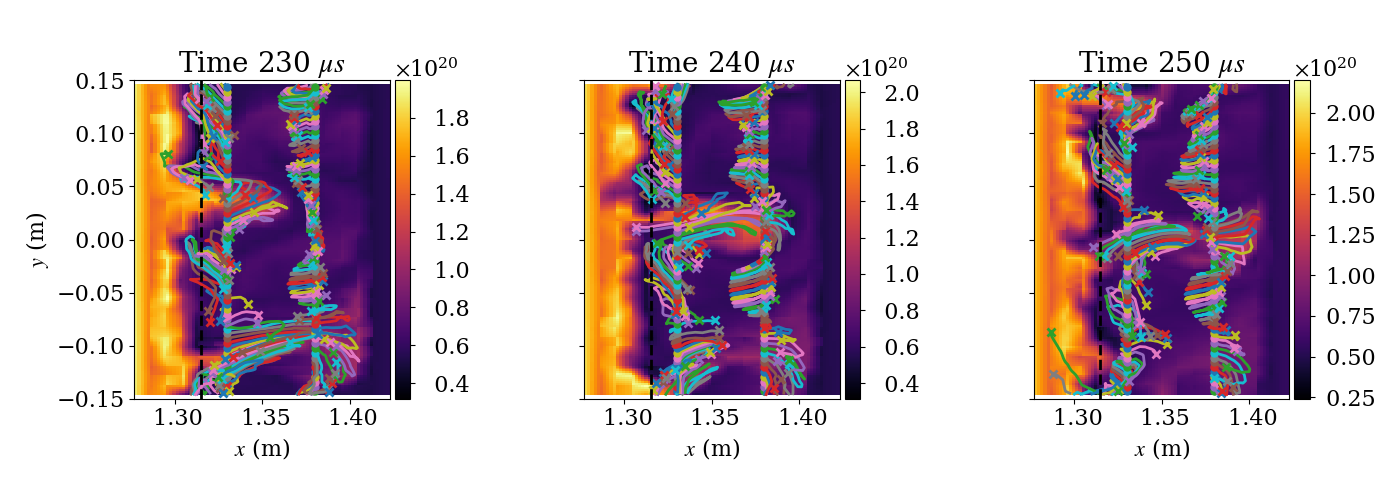}
    \caption{Three-dimensional magnetic field line trajectories at $t=230$, $240$, and $250$~$\mu$s, projected onto the $x-y$ plane.  The ion density at $z=0$~m is plotted in the background.  Each field line starts at $z=-4$~m and either $x=1.33$~m or $x=1$.38~m for a range of $y$v alues and is traced to $z=4$~m. The starting points are marked with circles and the ending points are marked with crosses. Focusing on the blob that is being ejected near $y=0$~m, we see that field lines are stretched and bent by the blob as it propagates into the SOL region. In previous frames (not shown) a blob was also ejected near $y=-0.1$~m. At $t=230$~$\mu$s the field lines are still stretched from this event, but they return closer to their equilibrium position by $t=250$~$\mu$s. (A full animation of this timeseries is included as supplementary materials online at https://doi.org/10.1017/S0022377820000070.)} 
    \label{fig:bstream-y}
    \centering
    \includegraphics[width=.9\textwidth]{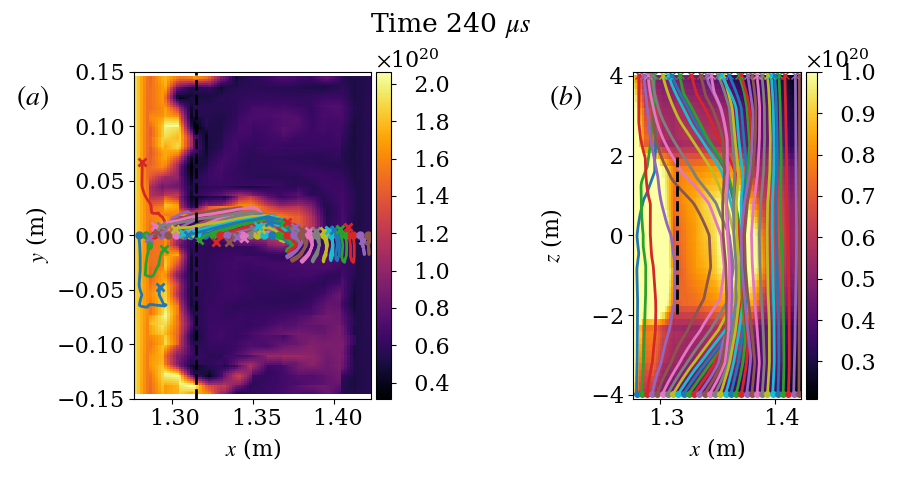}
    \caption{Three-dimensional magnetic field line trajectories at $t=240\ \mu$s, projected onto the $x-y$ plane in $(a)$ and the $x-z$ plane in $(b)$. The ion density is plotted in the background, at $z=0$ m in $(a)$ and averaged over $|y|<0.02$ m in $(b)$. Each field line starts at $y=0$ m and $z=-4$ m for a range of $x$ values and is traced to $z=4$ m. The starting points are marked with circles and the ending points are marked with crosses. Each field line is colored the same in both $(a)$ and $(b)$. The field lines in the near-SOL are stretched radially outward by a blob near $y=0$ m. }
    \label{fig:bstream-x}
\end{figure}

In Figures \ref{fig:bstream-y} and \ref{fig:bstream-x} we show projections of the three-dimensional magnetic field line trajectories. These plots are created by integrating the field line equations for the total (background plus fluctuation) magnetic field. In Figure \ref{fig:bstream-y}, each field line starts at $z=-4$ m and either $x=1.33$ m or $x=1.38$ m for a range of $y$ values and is traced to $z=4$ m. The starting points (at $z=-4$ m ) are marked with circles, while the ending points (at $z=4$ m) are marked with crosses. The trajectories have been projected onto the $x-y$ plane, and we have also plotted the ion density at $z=0$ m in the background. From left to right, we show a short time series of snapshots, with $t=230,\ 240$ and $250\ \mu$s. At $t=230\ \mu$s, a blob is starting to emerge from the source region at $y\approx0.04$ m. The field lines that start at $x=1.33$ m are beginning to be stretched radially outward as the blob emerges. In the $t=240\ \mu$s snapshot, we see that the blob is now propagating radially outward into the SOL region and the $x=1.33$ m field lines have been stretched further. The field lines that start at $x=1.38$ m are now also starting to be stretched near $y\approx0.2$ m, and they are stretched even more in the $t=250\ \mu$s snapshot as the blob continues to propagate. We can also see the remnants of another blob that was ejected near $y=-0.1$ m in previous frames. In the $t=230\ \mu$s snapshot, the field lines have been stretched by this blob, but by $t=250\ \mu$s the field lines in this region have returned closer to their equilibrium position. This behaviour of blobs bending and stretching the field lines is an inherently full-$f$ phenomenon. The blobs have a higher density and temperature than the background, so they raise the local plasma $\beta$ as they propagate. This causes the field lines to move with the plasma, allowing the fields lines to be deformed and stretched by the radially propagating blobs and ultimately leading to larger magnetic fluctuations.

In Figure \ref{fig:bstream-x} we show a slightly different view of the field-line trajectories at $t=240\ \mu$s. Field lines are still traced from the bottom ($z=-4$ m) to the top ($z=4$ m), but now each field line starts at $y=0$ m for a range of $x$. The starting points are again marked with circles and the ending points are marked with crosses. We have projected the three-dimensional trajectories onto the $x-y$ plane in Figure \ref{fig:bstream-x}$(a)$, and onto the $x-z$ plane in Figure \ref{fig:bstream-x}$(b)$. In $(a)$ we again plot the ion density at $z=0$ m in the background; in $(b)$ the ion density has been averaged over $|y|<0.02$ m. As can be seen in Figure \ref{fig:bstream-x}$(b)$, the blob propagating near $y\approx0$ m has stretched several field lines radially outward near the midplane. These bowed-out field lines originate from a range of $x$ values, $1.3\ $m $\lesssim x \lesssim 1.35$ m, and have all been dragged along with the blob as it was ejected from the source region and propagated radially outward. We also see some degree of line tying in these plots, with many of the field lines ending at a similar point in $x-y$ space to where they began, despite being stretched near the midplane. {The field lines are not perfectly line tied, however; if they were, the crosses would perfectly align with their corresponding circles in the $x-y$ projections. Because our sheath boundary condition allows current fluctuations at the sheath interface, we can model the finite resistance of the sheath, which makes line tying only partial \citep{kunkel1966interchange}. This allows the footpoints of the field lines to slip at the sheath interface \citep{ryutov2006dynamics}. Examining figures \ref{fig:bstream-y} and \ref{fig:bstream-x}, we see evidence of this in the simulation, with most of the end points moving slowly and smoothly in the vicinity of their origin, especially at larger $x$. In the source region, however, there are other field lines whose end points suddenly jump further away from their origin. This suggests that we are seeing line breaking 
(reconnection) due to electron inertia effects and numerical diffusion, as field lines are pushed close together by large perturbations in the source region.
}

\begin{figure}
    \centering
    \includegraphics[width=\textwidth]{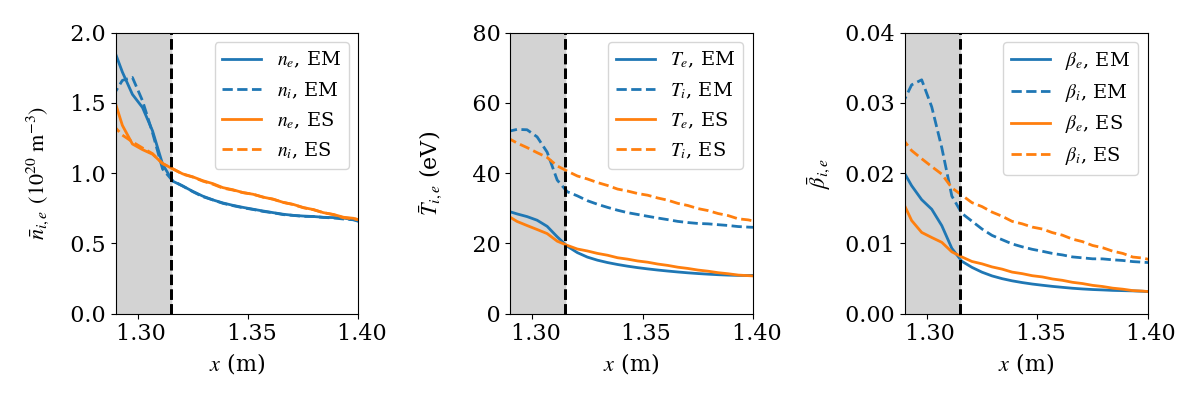}
    \caption{Radial profiles of density (left), temperature (middle) and beta (right) for electrons (solid) and ions (dashed). Profiles from the electromagnetic case (EM) are blue, and the electrostatic profiles (ES) are yellow. The profiles are averaged in $y$, $z$ and time using data near the midplane $(|z|<0.4\ \mathrm{ m})$ over a period of 400 $\mu$s. The electromagnetic case shows shallower profiles in the SOL region, indicating that there is less radial particle and heat transport (as confirmed by Figure \ref{fig:partflux}).}
    \label{fig:profiles}

    \centering
    \includegraphics[width=.65\textwidth]{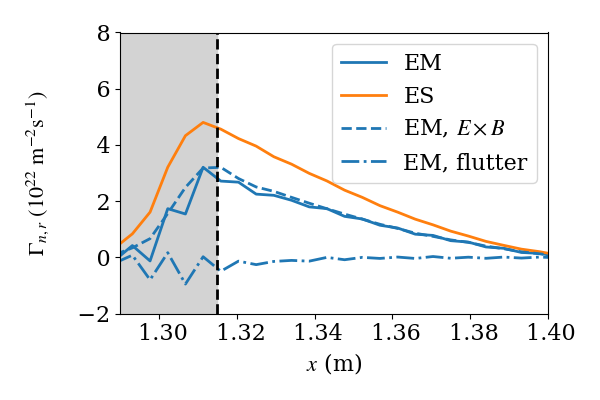}
    \caption{Radial profile of the radial electron particle flux $\Gamma_{n,r}$, averaged in $y$, $z$ and time using data near the midplane $(|z|<0.4\ \mathrm{ m})$ over a period of 400 $\mu$s. The transport in the electromagnetic case (EM, blue) is roughly 40\% lower than in the electrostatic case (ES, yellow). This contributes to the shallower electron density profile in the electromagnetic case, as seen in Figure \ref{fig:profiles}. The electromagnetic case contains radial transport from both $E\times B$ drift (dashed) and magnetic flutter (dot-dashed).}
    \label{fig:partflux}
\end{figure} 

\begin{figure}
    \centering
    \includegraphics[width=\textwidth]{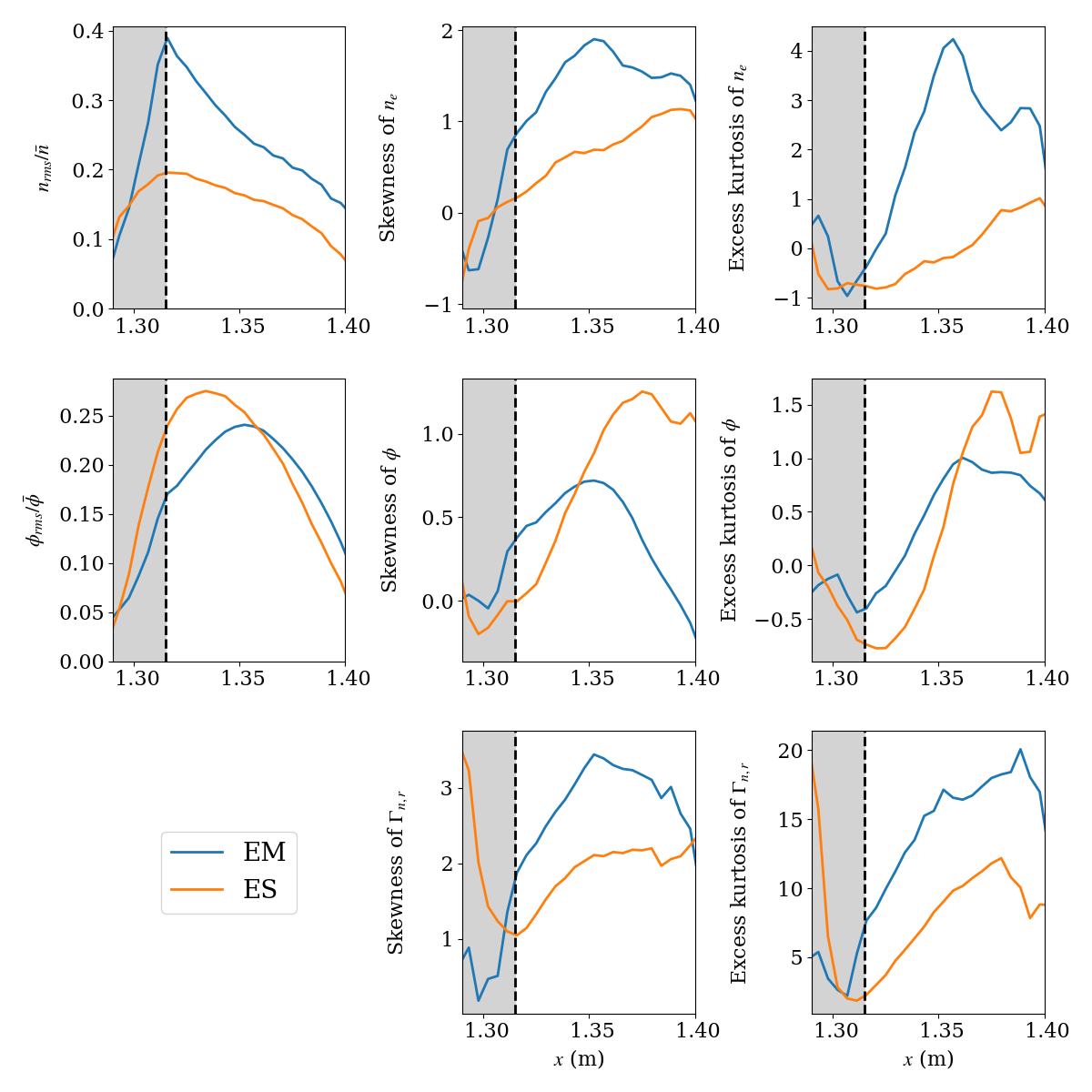}
    \caption{Comparison of fluctuation statistics for the electron density (top row), electrostatic potential {(middle row), and radial electron particle flux (bottom row)} between the electromagnetic case (EM, blue) and a corresponding electrostatic case (ES, yellow). From left to right, we show radial profiles of the normalized RMS fluctuation amplitude, skewness, and excess kurtosis. These plots were computed by averaging in $y$, $z$ and time using data near the midplane $(|z|<0.4\ \mathrm{ m})$ over a period of 400 $\mu$s. The electromagnetic case shows higher electron density fluctuation amplitude, skewness, and excess kurtosis. This is an indication that the electromagnetic case has more intermittent density fluctuations. {The skewness and kurtosis of the particle flux also indicate that the transport is more intermittent in the electromagnetic case.} The statistics for the potential are relatively similar for the electrostatic and electromagnetic cases.}
    \label{fig:ne_stats}
\end{figure}

We have also run a corresponding electrostatic simulation in this configuration for  direct comparison. This simulation is identical in configuration to the $L_z=8$ m case from \citet{shi2019full} except for the increased particle source rate and lack of cross-species collisions. In Figure \ref{fig:profiles} we show a comparison of radial profiles of density, temperature and $\beta$ for the electromagnetic and electrostatic cases. The profiles for the electromagnetic case are shallower in the SOL region and steeper in the source region. This suggests that there is less radial particle and heat transport into the SOL region in the electromagnetic case. This is in part confirmed by the profile of the radial particle flux in Figure \ref{fig:partflux}, showing approximately 40\% less particle transport in the electromagnetic case. The total particle flux $\Gamma_{n,r}$ includes the $E\times B$ particle flux, $\Gamma_{n,r,E\times B}=\langle \tilde{n}_e \tilde{v}_r \rangle$, with $v_r = E_r/B = -({1}/{B})\partial{\phi}/\partial{y}$. In the electromagnetic case, $\Gamma_{n,r}$ also includes the particle flux due to magnetic flutter, $\Gamma_{n,r,\text{flutter}}=\langle \widetilde{n_e u_{\parallel e}} b_r \rangle$, with $b_r =({1}/{B})\partial{A_\parallel}/\partial{y}$. The tilde indicates the fluctuation of a time-varying quantity, defined as $\tilde{A}=A-\bar{A}$ with $\bar{A}$ the time average of $A$. The brackets $\langle A \rangle$ denote an average in $y$, $z$ (near the midplane), and time. We also note that the electromagnetic profiles might be even shallower in the SOL region if not for the floor on the source used to prevent positivity issues in the distribution function at large $x$.

We also compare fluctuation statistics between the electromagnetic and electrostatic cases in Figure \ref{fig:ne_stats}. Statistics of the electron density are shown on the top row, the {middle row} shows statistics of electrostatic potential fluctuations{ and the bottom row shows statistics of the radial electron particle flux.} Despite the fact that the electromagnetic case shows less particle transport, the root-mean-square relative density fluctuations are larger in the electromagnetic case by up to a factor of two. The electromagnetic case also has higher skewness and excess kurtosis, indicating that the density fluctuations in the electromagnetic case are more intermittent. {The skewness and kurtosis of the particle flux also indicate more intermittency in the electromagnetic case. This contributes to the reduced transport shown in the electromagnetic case, since the transport events are rarer even if the fluctuation levels are larger.} Meanwhile, the fluctuation statistics for the potential are relatively similar between the electromagnetic and electrostatic cases. The statistics of the density and potential appear more coupled in the electrostatic case, consistent with the electrons behaving adiabatically when electromagnetic effects are neglected. In the electromagnetic case the density fluctuations are more intermittent and higher amplitude than the potential fluctuations.

Finally, we note that in terms of computational cost, the electromagnetic simulation is less than twice as expensive as the corresponding electrostatic simulation on a per-time-step basis. On 128 cores, the time per time step was 0.41 s/step for the electrostatic simulation and 0.68 s/step for the electromagnetic simulation. The increased cost is due to the additional field solves required for Ohm's law, along with additional terms in the gyrokinetic equation. However, due to time-step restrictions on an electrostatic simulation due to the electrostatic shear Alfv\'en mode (also known as the $\omega_H$ mode) \citep{lee1987gyrokinetic}, the electromagnetic simulation makes up some of the additional cost by taking slightly larger time steps. The total wall-clock time (on 128 cores) for the electrostatic simulation was approximately 65 h, and the electromagnetic simulation took about 82 h. Altogether, the cost of these simulations is relatively modest, and the addition of electromagnetic effects only makes the simulations marginally ($\sim 25$\%) more expensive. We also note that the new version of \gke, which uses a quadrature-free modal DG scheme, is approximately 10 times faster than the previous version of \gke used in \citet{shi2019full}, which used nodal DG with Gaussian quadrature. More details about the improvements from the quadrature-free modal scheme will be reported elsewhere.

\section{Summary \& Conclusion}
\label{sec:conclusion}

In this paper we have presented an energy-conserving scheme for the full-$f$ electromagnetic gyrokinetic system. We choose the symplectic formulation of EMGK, which uses the parallel velocity as an independent variable. This leads to the time derivative of the parallel vector potential, $\pderivInline{A_\parallel}{t}$, appearing explicitly in the gyrokinetic equation. We handle this term directly by solving an Ohm's law. We presented the conservation properties of the EMGK system.

We described the discontinuous Galerkin scheme used to discretize the EMGK system in phase space. We proved that the scheme preserves particle conservation, and that the scheme also preserves energy conservation provided that the discrete Hamiltonian is continuous. This is achieved by using the (continuous) finite-element method for the field solves. We also detailed a basic forward-Euler time-stepping scheme to be used in the stages of a multi-stage high-order SSP-RK scheme. The time-stepping scheme updates the gyrokinetic equation in two parts, with the result of the first part [denoted  $\partial{(\mathcal{J}f_s)^\star}/\partial{t}$] being used in Ohm's law to solve directly for $\pderivInline{A_\parallel}{t}$ so that it can then be used in the second part of the gyrokinetic update.

We have implemented the scheme in the gyrokinetic solver of the \gke computational plasma framework. We provided two linear benchmarks to validate the electromagnetic scheme: a kinetic Alfv\'en wave calculation and a local kinetic ballooning mode instability calculation. In both cases results from \gke agree well with analytic results. The success of these calculations, especially in cases with $\hat{\beta}/k_\perp^2\rho_s^2\gg1$, indicates that the scheme avoids the Amp\`ere cancellation problem. {Further, the discontinuous Galerkin nature of the scheme enables (but does not enforce) exact cancellation of the components of $E_\parallel$, allowing the system to capture the MHD limit with $E_\parallel=0$.}

We presented a nonlinear electromagnetic full-$f$ gyrokinetic simulation of turbulence on helical, open field lines as a rough model of the tokamak scrape-off layer. This simulation is the first nonlinear electromagnetic gyrokinetic simulation on open field lines. We showed data illustrating the interplay between blobs propagating into the SOL and the resulting bending and stretching of magnetic field lines. We also made quantitative comparisons between the electromagnetic simulation and a corresponding electrostatic simulation. Notably, the electromagnetic simulation exhibits less transport and shallower density and temperature profiles in the SOL, as well as larger root-mean-square density fluctuations with more intermittency. {Future work will examine the particle and energy balance in these nonlinear simulations to confirm the conservation properties proved in Section \ref{discons} after accounting for sources and losses to the walls.}

A number of extensions have been left to future work. The capability to simulate more realistic magnetic geometries using a non-orthogonal field-aligned coordinate system is currently in progress, so that effects of magnetic shear and non-constant curvature can be included. The inclusion of both open and closed-field-line regions, including the X-point in diverted geometries, is  an additional complication that must be addressed. Gyroaveraging is another important effect that must be implemented to improve fidelity. Further, a robust solution to the issue of maintaining positivity of the distribution function has been implemented for Hamiltonian terms and is in progress for the collision operator. This could, for example, alleviate the need to use a source floor in the nonlinear simulations presented in Section \ref{sec:nonlinear}, which could further enhance the differences between the electromagnetic and electrostatic profiles.

The modest cost of the nonlinear full-$f$ gyrokinetic simulations that we have presented make the prospect of using \gke for whole-device modelling a feasible goal. The inclusion of electromagnetic effects will be crucial to the fidelity of these efforts. Such a tool would be invaluable to future studies of turbulent transport in fusion devices, both from a theoretical perspective and also as a model for predicting and analysing the performance of current and future experiments.

\section*{Acknowledgements}
We would like to thank Tess Bernard, Petr Cagas, James Juno and other members of the \gke team for helpful discussions and support, including the development of the \texttt{postgkyl} post-processing tool which facilitated the creation of many figures in this paper.
 Research support came from the U.S. Department of Energy: NRM was
supported by the DOE CSGF program, provided under grant 
DE-FG02-97ER25308;
 AH and GWH were supported by the Partnership for Multiscale Gyrokinetic Turbulence (MGK) Project, part of the DOE Scientific Discovery Through Advanced Computing (SciDAC) program, via DOE contract DE-AC02-09CH11466 for the Princeton Plasma Physics Laboratory; MF was supported by DOE contract DE-FC02-08ER54966. Computations were performed on the Eddy cluster at the Princeton Plasma Physics Laboratory.

\section*{Supplementary material}  
Supplementary material is available at https://doi.org/10.1017/S0022377820000070.

\appendix

\section{Amp\`ere cancellation problem}
\label{app:cancel}
To understand the root of the Amp\`ere cancellation problem, we examine the simple Alfv\'en wave case from Section \ref{res:kaw}. Recall that in this simple case, the gyrokinetic system is given by
\begin{gather}
    \pderiv{f_e}{t} = \{H_e,f_e\}-\frac{e}{m_e}\pderiv{f_e}{v_\parallel}\pderiv{A_\parallel}{t} = - v_\parallel \pderiv{f_e}{z} {-} \frac{e}{m}\pderiv{f_e}{v_\parallel}\left(\pderiv{\phi}{z}+\pderiv{A_\parallel}{t} \right) \\
    k_\perp^2 \frac{m_i n_{0}}{B^2}\phi = e n_0-e\int dv_\parallel\ f_e \label{app:alf-poisson} \\
    \left(k_\perp^2 + C_N\ \frac{\mu_0 e^2}{m_e}\int dv_\parallel f_e\right)\pderiv{A_{\parallel}}{t} = -C_{J}\ \mu_0 e\int dv_\parallel\ v_\parallel \{H_e, f_e\} \label{app:alf-ohm}.
\end{gather}
Note that now we have inserted two constants, $C_N$ and $C_{J}$, in the integrals in Eq. (\ref{app:alf-ohm}). We will use these constants to represent small errors that could arise in the numerical calculation of these integrals. As in Section \ref{res:kaw}, we can calculate the dispersion relation for this system, but now we will take the limit $\omega\gg k_\parallel v_{te}$, so that the dispersion relation reduces to
\begin{equation}
    \omega^2 = \frac{k_\parallel^2 v_A^2}{C_N + k_\perp^2 \rho_s^2/\hat{\beta}}\left[1+(C_N - C_{J})\frac{\hat{\beta}}{k_\perp^2\rho_s^2}\right], \label{cancel-disp}
\end{equation}
where recall that $\hat{\beta}=(\beta_e/2) m_i/m_e$. This reduces to the correct result if $C_N=C_{J}=1$. However, if $C_N\neq C_{J}$, there will be large errors for modes with $\hat{\beta}/(k_\perp^2\rho_s^2)\gg 1$. This means the two integrals in Eq. (\ref{app:alf-ohm}) must be calculated carefully and consistently so that any errors exactly cancel.

\subsection{Hamiltonian ($p_\parallel$) formulation}
We now briefly discuss the cancellation problem in the Hamiltonian ($p_\parallel$) formulation of gyrokinetics. In this case, the simple system above becomes
\begin{gather}
    \pderiv{f_e}{t} = \{H_e,f_e\}= - \frac{1}{m_e}p_\parallel \pderiv{f_e}{z} {-} e\pderiv{f_e}{p_\parallel}\pderiv{\phi}{z} \\
    k_\perp^2 \frac{m_i n_{0}}{B^2}\phi = e n_0-e\int dp_\parallel\ f_e \label{app:alf-poissonH} \\
    \left(k_\perp^2 + C_N\ \frac{\mu_0 e^2}{m_e^2}\int dp_\parallel\  f_e\right)A_{\parallel} = -C_{J}\ \frac{\mu_0 e}{m_e^2}\int dp_\parallel\ p_\parallel f_e \label{app:alf-ohmH},
\end{gather}
where we have again included constants $C_N$ and $C_{J}$ to represent numerical errors in the integrals. The resulting dispersion relation is identical to Eq. (\ref{cancel-disp}), so the two integrals must again be calculated carefully and consistently so that errors exactly cancel. This could be slightly more challenging than in the symplectic case. Suppose the $p_\parallel$ grid does not extend to infinity but has some finite limits which are expected to be large enough in practice. These limits are used when numerically computing the integrals. Since $p_\parallel$ depends on a time-dependent quantity in $A_\parallel$, it is possible that the $p_\parallel$ limits may need to be time-dependent if $A_\parallel$ fluctuations are large in order to consistently compute the integrals.

\section{The discrete weak form of Ohm's law} \label{app:Ohm}
To obtain the discrete weak form of Ohm's law, we start by taking the time derivative of Eq. (\ref{FEMampere-local}):
\begin{equation}
\int\limits_{\mathcal{K}^R_j} \!\!d\v{R}\,\nabla_\perp \pderiv{A_{\parallel h}}{t} \v{\cdot} \nabla_\perp \varphi^{(j)} - \oint\limits_{\partial \mathcal{K}^R_j} \!\!d\v{s}_R\v{\cdot}\nabla_\perp \pderiv{A_{\parallel h}}{t}\ \varphi^{(j)}\  = \mu_0\sum_s\frac{q_s}{m_s} \int\limits_{\mathcal{K}^R_j} \!\!d\v{R}\, \varphi^{(j)} \!\!\int\limits_{\mathcal{T}^v}\!\!d\v{w}\, \pderiv{H_{s\,h}}{v_\parallel} \pderiv{(\mathcal{J}f_{s\,h})}{t}. \label{ohm1}
\end{equation}
Now, note that, analogously to Eq. (\ref{fstar}), we can write the discrete weak form of the gyrokinetic equation as
\begin{align}
    &\int\limits_{\mathcal{K}_j}\!\!d\v{R}\, d\v{w}\, \psi \pderiv{(\mathcal{J}f_h)}{t} =\notag\\ &\quad\int\limits_{\mathcal{K}_j}\!\!d\v{R}\, d\v{w}\, \psi \pderiv{(\mathcal{J}f_h)}{t}^\star - \oint\limits_{\partial \mathcal{K}_j}\!\!d\v{R}\, ds_w\, \!\!\left(\dot{v}^H_{\parallel h}-\frac{q}{m}\pderiv{A_{\parallel h}}{t}\right) \psi^- \widehat{\mathcal{J}f_h} - \int\limits_{\mathcal{K}_j}\!\!d\v{R}\, d\v{w}\, \mathcal{J}f_h \frac{q}{m}\pderiv{A_{\parallel h}}{t} \pderiv{\psi}{v_\parallel}. \label{app:DGstar}
\end{align}
where
\begin{align}
    \int\limits_{\mathcal{K}_j}d\v{R}\ d\v{w}\ \psi \pderiv{(\mathcal{J}f_h)^\star}{t} = &-\oint\limits_{\partial \mathcal{K}_j} d\v{w}\ d\v{s}_R\v{\cdot}\dot{\v{R}}_h \psi^- \widehat{\mathcal{J}f_h}+\int\limits_{\mathcal{K}_j}d\v{R}\ d\v{w}\ \mathcal{J}f_h \dot{\v{R}}_h\v{\cdot} \nabla \psi \notag \\&\quad+ \int\limits_{\mathcal{K}_j}d\v{R}\ d\v{w}\ \mathcal{J}f_h \dot{v}^H_{\parallel h} \pderiv{\psi}{v_\parallel} + \int\limits_{\mathcal{K}_j}d\v{R}\ d\v{w}\ \psi\left(\mathcal{J}C[f_h] + \mathcal{J}S\right).
\end{align}
Substituting $\psi=\varphi^{(j)} \pderivInline{H_h}{v_\parallel}$ in Eq. (\ref{app:DGstar}) and summing over velocity cells, we obtain
\begin{align}
    &\int\limits_{\mathcal{K}^R_j}\!\! d\v{R}\, \varphi^{(j)}\!\! \int\limits_{\mathcal{T}^v}\!\!d\v{w}\, \pderiv{H_{h}}{v_\parallel} \pderiv{(\mathcal{J}f_{h})}{t} = \!\int\limits_{\mathcal{K}^R_j}\!\! d\v{R}\, \varphi^{(j)}\!\! \int\limits_{\mathcal{T}^v}\!\!d\v{w}\, \pderiv{H_{h}}{v_\parallel} \pderiv{(\mathcal{J}f_{h})^\star}{t} \notag\\
    &\qquad- \!\int\limits_{\mathcal{K}_j^R}\!\! d\v{R}\, \varphi^{(j)}\sum\limits_i\!\oint\limits_{\partial\mathcal{K}_i^v}\!\!ds_w\! \left(\dot{v}^H_{\parallel h}-\frac{q}{m}\pderiv{A_{\parallel h}}{t}\right) \pderiv{H_h}{v_\parallel}^- \widehat{\mathcal{J}f_h} \notag \\
    &\qquad- \int\limits_{\mathcal{K}_j^R}d\v{R}\ \varphi^{(j)} \frac{q}{m}\pderiv{A_{\parallel h}}{t} \int\limits_{\mathcal{T}^v} d\v{w}\ \mathcal{J}\pderiv{^2H_h}{v_\parallel^2}f_h.
\end{align}
Note that, for $p_v>1$, the $v_\parallel$ surface term on the right-hand side vanishes because $\pderivInline{H_h}{v_\parallel}$ is continuous across $v_\parallel$ cell interfaces when $v_\parallel^2$ is included in the basis, resulting in cancellations. However, for $p_v=1$ this term is not continuous, and we must keep this surface term; further, the {last} term on the right-hand side vanishes for $p_v=1$ since $\pderivInline{^2H_h}{v_\parallel^2}=0$.
We can now substitute this result into the right-hand side of Eq. (\ref{ohm1}), giving
\begin{align}
    \int\limits_{\mathcal{K}^R_j}\!\!d\v{R}\, &\nabla_\perp \pderiv{A_{\parallel h}}{t} \v{\cdot} \nabla_\perp \varphi^{(j)} - \!\!\oint\limits_{\partial \mathcal{K}^R_j} \!\!\!d\v{s}_R\v{\cdot}\nabla_\perp \pderiv{A_{\parallel h}}{t}\ \varphi^{(j)}
    - \!\!\int\limits_{\mathcal{K}_j^R}\!\!d\v{R}\, \varphi^{(j)}\pderiv{A_{\parallel h}}{t} \sum_{s,i} \frac{\mu_0 q_s^2}{m_s}\!\oint\limits_{\partial\mathcal{K}^v_i} \!\!ds_w\, \bar{v}_\parallel^- \widehat{\mathcal{J} f_{s\,h}} \notag \\
    &= \mu_0\sum_s q_s \!\!\int\limits_{\mathcal{K}^R_j} \!\!d\v{R}\, \varphi^{(j)} \left[\int\limits_{\mathcal{T}^v}\!\!d\v{w}\, \bar{v}_\parallel \pderiv{(\mathcal{J}f_{s\,h})}{t}^\star-\sum_i\!\oint\limits_{\partial \mathcal{K}_i^v}\!\!ds_w\,  \bar{v}_\parallel^-{\dot{v}^H_{\parallel h}} \widehat{\mathcal{J}f_{s\,h}} \right], \qquad(p_v=1) \label{app:Ohmp1} \\
   \int\limits_{\mathcal{K}^R_j}\!\!d\v{R}\, &\nabla_\perp \pderiv{A_{\parallel h}}{t} \v{\cdot} \nabla_\perp \varphi^{(j)} - \!\!\oint\limits_{\partial \mathcal{K}^R_j} \!\!\!d\v{s}_R\v{\cdot}\nabla_\perp \pderiv{A_{\parallel h}}{t}\ \varphi^{(j)}
    + \!\!\int\limits_{\mathcal{K}_j^R}\!\!d\v{R}\, \varphi^{(j)}\pderiv{A_{\parallel h}}{t} \sum_s \frac{\mu_0 q_s^2}{m_s}\!\int\limits_{\mathcal{T}^v}\!\! d\v{w}\, \mathcal{J} f_{s\,h} \notag \\
    &=\mu_0\sum_s q_s \!\!\int\limits_{\mathcal{K}^R_j} \!\!d\v{R}\, \varphi^{(j)}\!\int\limits_{\mathcal{T}^v}\!\!d\v{w}\, v_\parallel \pderiv{(\mathcal{J} f_{s\,h})}{t}^\star, \qquad (p_v>1) \label{app:Ohmp2}
\end{align}
In Eq. (\ref{app:Ohmp1}), $\bar{v}_\parallel$ is the piecewise-constant projection of $v_\parallel$. 

\section{{Semi-discrete dispersion relation for Alfv\'en wave}}\label{app:disp}
In this appendix we derive a semi-discrete Alfv\'en wave dispersion relation using a piecewise-linear DG discretization for only the $v_\parallel$ coordinate. 
The main purpose of this appendix is to show how our discrete scheme avoids the Amp\`ere cancellation problem. We will also show how the integrals in the DG weak form are computed analytically in our modal scheme.

The semi-discrete gyrokinetic weak form for this system is
\begin{gather}
    \int\limits_{\mathcal{K}_j} dv_\parallel\ \psi \pderiv{f_h}{t} + \int\limits_{\mathcal{K}_j}dv_\parallel\ \psi \frac{1}{m_e}\pderiv{H_h}{v_\parallel}\pderiv{f_h}{z} - \frac{e}{m_e}\left(\pderiv{\phi}{z}+\pderiv{A_\parallel}{t}\right)\int\limits_{\mathcal{K}_j}dv_\parallel\ \pderiv{\psi}{v_\parallel}f_h \qquad\notag \\
    + \frac{e}{m_e}\left(\pderiv{\phi}{z}+\pderiv{A_\parallel}{t}\right)(\psi^- \widehat{f}_h)\bigg\rvert_{\partial\mathcal{K}_j} = 0.
\end{gather}
We begin by mapping each cell $\mathcal{K}_j$ to $\xi\in [-1,1]$ via the transformation $\xi = 2(v_\parallel - \bar{v}_\parallel^j)/\Delta v_\parallel$, where $\bar{v}_\parallel^j$ is the cell centre of cell $j$
\begin{gather}
    \int\limits_{-1}^1 d\xi\ \psi \pderiv{f_h}{t} + \int\limits_{-1}^1 d\xi\ \psi \frac{2}{\Delta v_\parallel}\frac{1}{m_e}\pderiv{H_h}{\xi}\pderiv{f_h}{z} - \frac{e}{m_e}\left(\pderiv{\phi}{z}+\pderiv{A_\parallel}{t}\right)\int\limits_{-1}^1 d\xi \frac{2}{\Delta v_\parallel}\pderiv{\psi}{\xi}f_h \qquad \notag \\
    + \frac{e}{m_e}\left(\pderiv{\phi}{z}+\pderiv{A_\parallel}{t}\right)\frac{2}{\Delta v_\parallel}(\psi^- \widehat{f}_h)\bigg\rvert_{-1}^{\ 1} = 0.
\end{gather}
Taking an orthonormal piecewise-linear basis in $\xi$, $\psi = [\frac{1}{\sqrt{2}}, \frac{\sqrt{3}}{\sqrt{2}}\xi]$, we expand $f_h$ on the basis in cell $j$ as
\begin{equation}
    f_h^j(z,v_\parallel,t) = \sum_\ell \psi_\ell(\xi) f_\ell^j(z,t) = \frac{1}{\sqrt{2}}f_0^j + \frac{\sqrt{3}}{\sqrt{2}} f_1^j \xi.
\end{equation}
(Note that in the fully discretized case all coordinate dependence would be contained in multi-variate basis functions.) We can then analytically integrate the weak form for each $\psi_\ell$ to obtain the modal evolution equation for each DG `mode' $f_\ell$:
\begin{gather}
    \pderiv{f_0^j}{t} + \bar{v}_\parallel^j\pderiv{f_0^j}{z} + \frac{e}{m_e}\left(\pderiv{\phi}{z} + \pderiv{A_\parallel}{t}\right)\frac{\sqrt{2}}{\Delta v_\parallel}\widehat{f}_{h}^{\,j}\bigg\rvert_{-1}^{\ 1} = 0 \\
    \pderiv{f_1^j}{t} + \bar{v}_\parallel^j\pderiv{f_1^j}{z} + \frac{e}{m_e}\left(\pderiv{\phi}{z} + \pderiv{A_\parallel}{t}\right)\frac{\sqrt{6}}{\Delta v_\parallel}\xi\widehat{f}_{h}^{\,j}\bigg\rvert_{-1}^{\ 1} - \frac{e}{m_e}\left(\pderiv{\phi}{z} + \pderiv{A_\parallel}{t}\right)\frac{2\sqrt{3}}{\Delta v_\parallel}f_0^j = 0.
\end{gather}
Finally, we will make the ansatz $f_\ell = F_{M\ell} + f_\ell e^{i k_\parallel z- i \omega t}$ and linearize:
\begin{gather}
    -i(\omega-k_\parallel \bar{v}_\parallel^j){f_0^j} + \frac{e}{m_e}\left(ik_\parallel{\phi} + \pderiv{A_\parallel}{t}\right)\frac{\sqrt{2}}{\Delta v_\parallel}\widehat{F}_{Mh}^{j}\bigg\rvert_{-1}^{\ 1} = 0 \label{f0w} \\
    -i(\omega-k_\parallel \bar{v}_\parallel^j){f_1^j} + \frac{e}{m_e}\left(ik_\parallel{\phi} + \pderiv{A_\parallel}{t}\right)\frac{\sqrt{6}}{\Delta v_\parallel}\xi\widehat{F}_{Mh}^{j}\bigg\rvert_{-1}^{\ 1} - \frac{e}{m_e}\left(ik_\parallel{\phi} + \pderiv{A_\parallel}{t}\right)\frac{2\sqrt{3}}{\Delta v_\parallel}F_{M0}^j = 0.
\end{gather}

We now turn to the field equations. The Poisson equation is
\begin{equation}
    k_\perp^2\frac{m_i n_0}{B^2}\phi = e n_0 - e\sum_j \int\limits_{\mathcal{K}_j} dv_\parallel\ f_h.
\end{equation}
Expanding $f_h$ and using the ansatz, this becomes
\begin{equation}
    k_\perp^2\frac{m_i n_0}{B^2}\phi = e n_0 - e\sum_j \frac{\Delta v_\parallel}{\sqrt{2}}F_{M0}^j - e \sum_j \frac{\Delta v_\parallel}{\sqrt{2}} f_0^j = - e \sum_j \frac{\Delta v_\parallel}{\sqrt{2}} f_0^j,
\end{equation}
where we will define $F_{Mh}$ so that $\sum_j \frac{\Delta v_\parallel}{\sqrt{2}}F_{M0}^j = n_0$ by definition. For Ohm's law, we must use the $p_v=1$ form from Eq. (\ref{Ohmp1}), which gives
\begin{equation}
    k_\perp^2\pderiv{A_\parallel}{t} - \pderiv{A_\parallel}{t} \frac{\mu_0 e^2}{m_e}\sum_j \bar{v}_\parallel^j \widehat{f}_{h}^{\,j}\bigg\rvert_{\partial \mathcal{K}_j} = - \mu_0 e \sum_j \int\limits_{\mathcal{K}_j}dv_\parallel\ \bar{v}_\parallel^j \pderiv{f_h}{t}^\star + ik_\parallel \phi \frac{\mu_0 e^2}{m_e}\sum_j \bar{v}_\parallel^j \widehat{f}_{h}^{\,j}\bigg\rvert_{\partial \mathcal{K}_j},
\end{equation}
where
\begin{equation}
    \int\limits_{\mathcal{K}_j}dv_\parallel\ \psi \pderiv{f_h}{t}^\star = -i k_\parallel \int\limits_{\mathcal{K}_j}dv_\parallel\ \psi \frac{1}{m_e}\pderiv{H_h}{v_\parallel} f_h + \frac{e}{m_e}ik_\parallel\phi\int\limits_{\mathcal{K}_j}dv_\parallel\ \pderiv{\psi}{v_\parallel}f_h.
\end{equation}
Again expanding and using the ansatz, Ohm's law becomes
\begin{equation}
k_\perp^2\pderiv{A_\parallel}{t} - \pderiv{A_\parallel}{t} \frac{\mu_0 e^2}{m_e}\sum_j \bar{v}_\parallel^j \widehat{F}_{Mh}^{j}\bigg\rvert_{-1}^{\ 1} = - \mu_0 e(i k_\parallel) \sum_j \frac{\Delta v_\parallel}{\sqrt{2}}\bar{v}_\parallel^{j\,2} f_0^j + ik_\parallel \phi \frac{\mu_0 e^2}{m_e}\sum_j \bar{v}_\parallel^j \widehat{F}_{Mh}^{j}\bigg\rvert_{-1}^{\ 1}. \label{ohmw} 
\end{equation}
Analogously to Appendix \ref{app:cancel}, we can rewrite this equation as
\begin{equation}
k_\perp^2\pderiv{A_\parallel}{t} + \pderiv{A_\parallel}{t} \frac{\mu_0 e^2 n_0}{m_e} C_N = - \mu_0 e(i k_\parallel) \sum_j \frac{\Delta v_\parallel}{\sqrt{2}}\bar{v}_\parallel^{j\,2} f_0^j - ik_\parallel \phi \frac{\mu_0 e^2 n_0}{m_e}C_J, \label{ohmw2} 
\end{equation}
where we have defined
\begin{gather}
    C_N = -\sum_j\frac{1}{n_0}\bar{v}_\parallel^j \widehat{F}_{Mh}^{j}\bigg\rvert_{-1}^{\ 1}, \\
    C_J = -\sum_j\frac{1}{n_0}\bar{v}_\parallel^j \widehat{F}_{Mh}^{j}\bigg\rvert_{-1}^{\ 1}.
\end{gather}
Clearly $C_N = C_J$, which allows us to move the $C_N$ term to the right-hand side, giving a term proportional to the total parallel electric field, $E_\parallel = -ik_\parallel\phi -\pderiv{A_\parallel}{t}$:
\begin{equation}
    k_\perp^2\pderiv{A_\parallel}{t} = - \mu_0 e(i k_\parallel) \sum_j \frac{\Delta v_\parallel}{\sqrt{2}}\bar{v}_\parallel^{j\,2} f_0^j - \frac{\mu_0 e^2 n_0}{m_e}\left(ik_\parallel \phi +\pderiv{A_\parallel}{t}\right)C_N.
\end{equation}
This is essential for avoiding the cancellation problem because if we instead had $C_N\neq C_J$, we would have had a leftover term proportional to $(C_N-C_J)\pderiv{A_\parallel}{t}$ on the left-hand side. This leftover term would then lead to the spurious term  proportional to $\hat{\beta}/(k_\perp^2\rho_s^2)$ in Eq. (\ref{cancel-disp}).

In order to compute the integral quantities in the field equations, we use Eq. (\ref{f0w}) to compute
\begin{align}
    f_0^j &= -\frac{e}{m_e}\left( ik_\parallel \phi + \pderiv{A_\parallel}{t}\right)\frac{i}{\omega - k_\parallel \bar{v}_\parallel^j}\frac{\sqrt{2}}{\Delta v_\parallel} \widehat{F}_{Mh}^{j}\bigg\rvert_{-1}^{\ 1} \notag\\
    &\approx -\frac{e}{m_e}\left( ik_\parallel \phi + \pderiv{A_\parallel}{t}\right)\frac{i}{\omega}\left(1+\frac{k_\parallel \bar{v}_\parallel^j}{\omega} + \frac{k_\parallel^2 \bar{v}_\parallel^{j\,2}}{\omega^2} + \frac{k_\parallel^3 \bar{v}_\parallel^{j\,3}}{\omega^3}\right) \frac{\sqrt{2}}{\Delta v_\parallel} \widehat{F}_{Mh}^{j}\bigg\rvert_{-1}^{\ 1} \ \quad (\omega \gg k_\parallel v_{te}),
\end{align}
where we have expanded in the limit $\omega \gg k_\parallel v_{te}$.
Now we can calculate
\begin{gather}
    \sum_j \frac{\Delta v_\parallel}{\sqrt{2}} f_0^j = -\frac{e}{m_e}\left( ik_\parallel \phi + \pderiv{A_\parallel}{t}\right)\frac{i}{\omega}\sum_j\left(1+\frac{k_\parallel \bar{v}_\parallel^j}{\omega} + \frac{k_\parallel^2 \bar{v}_\parallel^{j\,2}}{\omega^2} + \frac{k_\parallel^3 \bar{v}_\parallel^{j\,3}}{\omega^3}\right)  \widehat{F}_{Mh}^{j}\bigg\rvert_{-1}^{\ 1} \\
    \sum_j \frac{\Delta v_\parallel}{\sqrt{2}}\bar{v}_\parallel^{j\,2}f_0^j = -\frac{e}{m_e}\left( ik_\parallel \phi + \pderiv{A_\parallel}{t}\right)\frac{i}{\omega}\sum_j\left(1+\frac{k_\parallel \bar{v}_\parallel^j}{\omega}\right) \bar{v}_\parallel^{j\,2} \widehat{F}_{Mh}^{j}\bigg\rvert_{-1}^{\ 1}
\end{gather}
Substituting these integral quantities into the field equations, the Poisson equation becomes
\begin{align}
    k_\perp^2 \frac{m_i n_0}{B^2}\phi &= \frac{e^2}{m_e}\left(ik_\parallel \phi + \pderiv{A_\parallel}{t}\right)\frac{i k_\parallel}{\omega^2}\sum_j \left[\bar{v}_\parallel^j \widehat{F}_{Mh}^{j}\bigg\rvert_{-1}^{\ 1} +\frac{k_\parallel}{\omega} \left(1+\frac{k_\parallel \bar{v}_\parallel^j}{\omega} \right)\bar{v}_\parallel^{j\,2} \widehat{F}_{Mh}^{j}\bigg\rvert_{-1}^{\ 1}\right] \notag \\
    &= \frac{e^2}{m_e}\left(ik_\parallel \phi + \pderiv{A_\parallel}{t}\right)\frac{i k_\parallel}{\omega^2} \left[-n_0 C_N +\sum_j\frac{k_\parallel}{\omega} \left(1+\frac{k_\parallel \bar{v}_\parallel^j}{\omega} \right)\bar{v}_\parallel^{j\,2} \widehat{F}_{Mh}^{j}\bigg\rvert_{-1}^{\ 1}\right] \label{poissonw}
\end{align}
and Ohm's law becomes
\begin{align}
    k_\perp^2 \pderiv{A_\parallel}{t} &= \frac{\mu_0 e^2}{m_e} \left(ik_\parallel \phi + \pderiv{A_\parallel}{t}\right)\sum_j \left[\bar{v}_\parallel^j \widehat{F}_{Mh}^{j}\bigg\rvert_{-1}^{\ 1} +\frac{k_\parallel}{\omega} \left(1+\frac{k_\parallel \bar{v}_\parallel^j}{\omega} \right)\bar{v}_\parallel^{j\,2} \widehat{F}_{Mh}^{j}\bigg\rvert_{-1}^{\ 1}\right] \notag \\
    &= \frac{\mu_0 e^2}{m_e} \left(ik_\parallel \phi + \pderiv{A_\parallel}{t}\right) \left[-n_0 C_N +\sum_j\frac{k_\parallel}{\omega} \left(1+\frac{k_\parallel \bar{v}_\parallel^j}{\omega} \right)\bar{v}_\parallel^{j\,2} \widehat{F}_{Mh}^{j}\bigg\rvert_{-1}^{\ 1}\right], \label{ohmw3}
\end{align}
where we have substituted the definition of $C_N$.
We can now combine Eqs. (\ref{poissonw}) and (\ref{ohmw3}) by multiplying Eq. (\ref{poissonw}) by $i k_\parallel T_e/n_0$, multiplying Eq. (\ref{ohmw3}) by $\rho_s^2 = m_i T_e/(e^2 B^2)$ and summing the two equations to get
\begin{align}
&k_\perp^2\rho_s^2\left(ik_\parallel \phi + \pderiv{A_\parallel}{t}\right)= \notag \\
&\quad \left(ik_\parallel \phi + \pderiv{A_\parallel}{t}\right)\left(\hat{\beta}-\frac{k_\parallel^2 v_{te}^2}{\omega^2}\right) \left[-C_N +\frac{1}{n_0}\sum_j\frac{k_\parallel}{\omega} \left(1+\frac{k_\parallel \bar{v}_\parallel^j}{\omega} \right)\bar{v}_\parallel^{j\,2} \widehat{F}_{Mh}^{j}\bigg\rvert_{-1}^{\ 1}\right],
\end{align}
with $\hat{\beta}=(\beta_e/2)m_i/m_e$. This then yields the dispersion relation
\begin{align}
k_\perp^2\rho_s^2 &=\left(\hat{\beta}-\frac{k_\parallel^2 v_{te}^2}{\omega^2}\right)\left[-C_N +\frac{1}{n_0}\sum_j\frac{k_\parallel}{\omega} \left(1+\frac{k_\parallel \bar{v}_\parallel^j}{\omega} \right)\bar{v}_\parallel^{j\,2} \widehat{F}_{Mh}^{j}\bigg\rvert_{-1}^{\ 1}\right]. \label{dispw}
\end{align}

To evaluate $C_N$ and the other sum, we need to project the background onto the basis in each cell. Taking $F_M = n_{0h} (2\pi v_t^2)^{-1/2}\exp\left(-v_\parallel^2/(2v_t^2)\right)$, we project onto the basis in cell $j$ as
\begin{align}
    F_{M0}^j &= \frac{1}{\sqrt{2}}\int\limits_{-1}^1 d\xi\frac{1}{\sqrt{2\pi v_t^2}}e^{\frac{-\left(\bar{v}_\parallel^j+\Delta v_\parallel \xi/2\right)^2}{2 v_t^2}} \notag \\
    &= \frac{n_{0h}}{\Delta v_\parallel \sqrt{2}}\left[ \text{erf}\left(\frac{(j+1/2)\Delta v_\parallel}{v_t\sqrt{2}}\right) - \text{erf}\left(\frac{(j-1/2)\Delta v_\parallel}{v_t\sqrt{2}}\right)\right] \\
    F_{M1}^j &= \frac{\sqrt{3}}{\sqrt{2}}\int\limits_{-1}^1 d\xi\ \xi\frac{1}{\sqrt{2\pi v_t^2}}e^{\frac{-\left(\bar{v}_\parallel^j+\Delta v_\parallel \xi/2\right)^2}{2 v_t^2}} \notag \\ &= -\frac{2 n_{0h}v_t}{\Delta v_\parallel^2}\sqrt{\frac{3}{\pi}}\left(e^{\frac{-\left((j+1/2)\Delta v_\parallel\right)^2}{2 v_t^2}} - e^{\frac{-\left((j-1/2)\Delta v_\parallel\right)^2}{2 v_t^2}} \right) \notag \\
    &\qquad-\frac{n_{0h}\sqrt{6}}{\Delta v_\parallel}j\left[ \text{erf}\left(\frac{(j+1/2)\Delta v_\parallel}{v_t\sqrt{2}}\right) - \text{erf}\left(\frac{(j-1/2)\Delta v_\parallel}{v_t\sqrt{2}}\right)\right],
\end{align}
where we have taken the cell centre to be $\bar{v}_\parallel^j = j\Delta v_\parallel$. Now we can evaluate integrated quantities such as
\begin{align}
    \sum_{j=-N}^P \frac{\Delta v_\parallel}{\sqrt{2}}F_{M0}^j &= \frac{n_{0h}}{2}\left[\text{erf}\left(\frac{(P+1/2)\Delta v_\parallel}{v_t\sqrt{2}}\right) - \text{erf}\left(\frac{-(N+1/2)\Delta v_\parallel}{v_t\sqrt{2}}\right)\right] \notag \\
    &= \frac{n_{0h}}{2}\left[\text{erf}\left(\frac{v_\text{max}}{v_t\sqrt{2}}\right) - \text{erf}\left(\frac{v_\text{min}}{v_t\sqrt{2}}\right)\right] \notag \\
    &= n_{0h}\text{erf}\left(\frac{v_\text{max}}{v_t\sqrt{2}}\right) \qquad\qquad \text{assuming } v_\text{min} = -v_\text{max}
\end{align}
where now note that we have finite limits on the sum to indicate finite extents of the $v_\parallel\in [-v_\text{max}, v_\text{max}]$ grid.
 As we alluded to before, we will define $n_{0h}$ so that $\sum_j \frac{\Delta v_\parallel}{\sqrt{2}}F_{M0}^j = n_0$ by definition, which means 
\begin{equation}
    n_{0h} = \frac{n_0}{\text{erf}\left(\frac{v_\text{max}}{v_t\sqrt{2}}\right)}.
\end{equation}
Note that $\text{erf}(x)$ quickly approaches 1 with increasing $x$, so that for example when $v_\text{max} = 4 v_t$, $n_{0h} \approx 1.00006 \,n_0$. We can also calculate
\begin{align}
    C_N &= -\frac{1}{n_0}\sum_{j=-N}^P \bar{v}_\parallel^j \widehat{F}_{M h}^j\bigg\rvert_{-1}^{\ 1} \notag \\
    &= -\frac{1}{n_0}\sum_{j=-N}^P \frac{j\Delta v_\parallel}{2}\left[ F_{Mh}^j(1)+F_{Mh}^{j+1}(-1) - F_{Mh}^j(-1)-F_{Mh}^{j-1}(1)\right] \notag \\&\qquad\qquad +\frac{\sigma}{n_0}  \sum_{j=-N}^P\frac{j\Delta v_\parallel}{2}\left[ F_{Mh}^{j+1}(-1)-F_{Mh}^{j}(1) - F_{Mh}^j(-1)+F_{Mh}^{j-1}(1) \right] \notag \\
    &= \frac{1}{n_0}\sum_{j=-N}^P \frac{\Delta v_\parallel}{2}\left[F_{Mh}^j(1)+F_{Mh}^j(-1)\right] \notag \\
    &\qquad\qquad+\frac{\sigma}{n_0} \sum_{j=-N}^P \frac{\Delta v_\parallel}{2}\left[F_{Mh}^j(1)-F_{Mh}^j(-1)\right]+ \text{boundary terms}\notag \\
    &=  \frac{1}{n_0}\sum_{j=-N}^P \frac{\Delta v_\parallel}{\sqrt{2}} F_{M0}^j +\frac{\sigma}{n_0} \sum_{j=-N}^P \frac{\Delta v_\parallel\sqrt{3}}{\sqrt{2}} F_{M1}^j +  \text{boundary terms} \notag \\
    &= 1 + \text{boundary terms} \approx 1,
\end{align}
where $\sigma$ is the sign of the upwind velocity, and the boundary terms that result from the finite limits on the sum are small for $v_\text{max}\gtrsim 4 v_t$. Thus we have $C_N\approx1$ as expected, although it does not need to be exactly equal to unity to eliminate the cancellation problem. Instead, it was sufficient that $C_N=C_J$ on either side of Eq. (\ref{ohmw2}).

One can also show that 
\begin{gather}
    \sum_{j=-N}^P \bar{v}_\parallel^{j\,2} \widehat{F}_{M h}^j\bigg\rvert_{-1}^{\ 1} = \text{boundary terms }\approx 0 \\
    \sum_{j=-N}^P \bar{v}_\parallel^{j\,3} \widehat{F}_{M h}^j\bigg\rvert_{-1}^{\ 1} =-3n_0 v_t^2\left(1 - \frac{\Delta v_\parallel^2}{12 v_t^2}\right) + \text{boundary terms }\approx - 3n_0 v_t^2\left(1 - \frac{\Delta v_\parallel^2}{12 v_t^2}\right).
\end{gather}
Now substituting these results into the dispersion relation from Eq. (\ref{dispw}), we obtain
\begin{align}
k_\perp^2\rho_s^2 \approx \left(\hat{\beta}-\frac{k_\parallel^2 v_{te}^2}{\omega^2}\right)\left(-C_N - \frac{3 k_\parallel^2 v_{te}^2}{\omega^2}\left(1 - \frac{\Delta v_\parallel^2}{12 v_{te}^2}\right)\right) \approx -\left(\hat{\beta}-\frac{k_\parallel^2 v_{te}^2}{\omega^2}\right)
\end{align}
after again taking the limit $\omega \gg k_\parallel v_{te}$ and assuming $\Delta v_\parallel \sim v_{te}$. This finally gives
\begin{equation}
    \omega^2 \approx \frac{k_\parallel^2 v_{te}^2}{\hat{\beta}+ k_\perp^2 \rho_s^2},
\end{equation}
which is the expected dispersion relation.

\bibliographystyle{jpp} 
\bibliography{nrmbib}

\end{document}